\theoremstyle{plain}
\newtheorem{theorem}{Theorem}
\newtheorem{definition}[theorem]{Definition}
\newtheorem{lemma}[theorem]{Lemma}
\newtheorem{corollary}[theorem]{Corollary}
\newcommand{\fpt}{\mathsf{FPT}}
\newcommand{\ptime}{\mathsf{PTIME}}
\newcommand{\fptime}{\mathsf{fPTIME}}
\newcommand{\np}{\mathsf{NP}}
\newcommand{\conp}{\mathsf{co\textup{-}NP}}
\newcommand{\nc}{\mathsf{NC}}
\newcommand{\fancyc}{\mathcal{C}}
\newcommand{\fancyd}{\mathcal{D}}
\newcommand{\fancyl}{\mathcal{L}}
\newcommand{\para}[1]{\mathsf{para\textup{-}}#1}
\newcommand{\chopped}[1]{\mathsf{chopped\textup{-}}#1}
\newcommand{\mixed}[1]{\mathsf{mixed\textup{-}}#1}
\newcommand{\mathcomp}{\mathsf{\textup{-}comp\textup{-}}}
\newcommand{\lcompc}{\fancyl\mathcomp\fancyc}
\newcommand{\polycomp}[1]{\mathsf{poly\textup{-}comp\textup{-}}#1}
\newcommand{\expcomp}[1]{\mathsf{exp\textup{-}comp\textup{-}}#1}
\newtheorem{prop}[theorem]{Proposition}
\newtheorem{remrk}[theorem]{Remark}
\newtheorem{assumption}[theorem]{Assumption}
\newcommand{\wpfin}{\wp_{\mathrm{fin}}}
\newcommand{\len}{\mathrm{len}}
\newcommand{\un}{\mathrm{un}}
\newcommand{\CI}{\mathrm{CI}}
\newcommand{\MMC}{\mathrm{MMC}}
\newcommand{\CMI}{\mathrm{CMI}}
\newcommand{\nuclass}[1]{| | \to #1}
\newcommand{\relb}{\mathbf{B}}
\newcommand{\circuitsat}{\mathrm{CIRCUIT\textup{-}SAT}}
\newcommand{\threesat}{\mathrm{3\textup{-}SAT}}
\newcommand{\hampath}{\mathrm{HAM\textup{-}PATH}}
\newcommand{\dhittingset}{$d$\mathrm{\textup{-}HITTING\textup{-}SET}}
\newcommand{\uepmc}{\mathrm{unary\textup{-}EP\textup{-}MC}}
\newcommand{\N}{\mathbb{N}}
\title{Parameter Compilation}
\author{
Hubie Chen \footnote{Universidad del País Vasco, E-20018 San Sebastián, Spain,
\emph{and} IKERBASQUE, Basque Foundation for Science, E-48011 Bilbao, Spain
}}
\titlerunning{Parameter Compilation} 
\author[1]{Hubie Chen}
\affil[1]{
Universidad del País Vasco,
E-20018 San Sebastián,
Spain,
\emph{and} IKERBASQUE, Basque Foundation for Science,
E-48011 Bilbao,
Spain}
\authorrunning{H. Chen} 
\subjclass{F.1.3 [Computation by Abstract Devices]: Complexity measures and classes}
\keywords{compilation, parameterized complexity}
\begin{document}

\maketitle

\begin{abstract}
In resolving instances of a computational problem, 
if multiple instances of interest share a feature in common,
it may be fruitful to compile this feature into a format
that allows for more efficient resolution,
even if the compilation is relatively expensive.
In this article, we introduce a formal framework for
classifying problems according to their compilability.
The basic object in our framework is that of
a parameterized problem, which here is a language
along with a parameterization---a map which provides,
for each instance, a so-called parameter
on which compilation may be performed.
Our framework is positioned within the 
paradigm of parameterized complexity,
and our notions are relatable to established
concepts in the theory of parameterized complexity. 
Indeed, we view our framework as 
playing a unifying role, integrating together
parameterized complexity and compilability theory.
 \end{abstract}

\section{Introduction}

In resolving instances of a computational problem, 
if it is the case that multiple instances of interest
share a feature in common, 
it may be fruitful to 
\emph{compile} this feature into a format 
that allows for more efficient resolution,
even if the compilation is relatively expensive.
As a first, simple example, 
consider the problem of deciding if two nodes of an undirected
graph are connected.  If it is anticipated
that many such connectivity queries
will share the same graph $G$, it may be worthwhile to 
compile $G$ into a format that
will allow for accelerated
resolution of the queries.
As a second example, consider the problem 
of evaluating a database query on a database.
If one is interested in 
a small set of queries that will be posed to numerous databases,
it may be worthwhile to compile the queries of interest
into a format that allows for the fastest evaluation.
Note that a relatively expensive compilation process may be
worthwhile if its results are amortized by repeated use.
Indeed, one may conceive of compilation as an off-line preprocessing,
whose expense is offset by its later on-line use.

In this article, we attempt to make an infrastructural
contribution by introducing a formal framework 
for classifying problems according to their compilability.
Such a framework was previously presented by
Cadoli, Domini, Liberatore, and Schaerf~\cite{CDLS02-preprocessing},
(hereafter, \emph{CDLS}); we will discuss the relationship
between our framework and theirs below.

The basic object in our framework is a
\emph{paramaterized problem}, which we define
to be a language $Q$ along with a \emph{parameterization $\kappa$},
a polynomial-time computable mapping defined 
from strings to strings.
(For precise details and justifications of definitions,
refer to the technical sections of the article.)
As usual, we refer to $\kappa(x)$
as the \emph{parameter} of an instance $x$.
In our framework, we wish to understand 
for which problems
the parameters can be succinctly compiled into a form such that,
post-compilation, the problem can be resolved in polynomial-time.
The base class of our framework, called $\polycomp{\ptime}$,
is (essentially) defined to contain a parameterized problem
$(Q, \kappa)$ if there exists a polynomial-length, computable
function $c$ such that if each instance $x$ is always presented
along with $c(\kappa(x))$, then each instance can be resolved
in polynomial time.  The function $c$ models the notion of
compilation of the parameters.
In order to give evidence of non-containment
in the class $\polycomp{\ptime}$
and also to facilitate problem classification, 
we introduce a hierarchy
of parameterized complexity classes $\chopped{\fancyc}$,
one for each classical complexity class $\fancyc$;
we observe (for example)
that $\chopped{\np}$ is not contained in
$\polycomp{\ptime}$, assuming that the polynomial hierarchy
does not collapse
(see 
Proposition~\ref{prop:chopped-ptime-equals-polycomp-ptime}
and
Theorem~\ref{thm:properness-chopped-classes}), 
and hence hardness of a problem
for $\chopped{\np}$ can be construed as evidence of
non-containment in $\polycomp{\ptime}$. 
We observe a number of completeness and hardness results
for $\chopped{\np}$ (Section~\ref{sect:completeness-hardness-chopped-np}).\footnote{
These results include a hardness result
on model checking existential positive sentences
(Proposition~\ref{prop:upemc});
we remark that obtaining a broader understanding of
the non-compilability results in the author's 
previous study of model checking~\cite{Chen14-existentialpositive} 
was in fact a motivation of the present article.
}
The class $\polycomp{\ptime}$ and the classes $\chopped{\fancyc}$
are all subsets of the parameterized class $\fpt$,
which is considered to be the basic notion of tractability
in the paradigm of parameterized complexity.\footnote{
	Note that the containment of $\polycomp{\ptime}$ in 
	$\fpt$ is essentially observed (in different language)
	in the last paragraph of Section 5 of~\cite{CDLS02-preprocessing}.
}
We believe that the introduced classes constitute 
a natural stratification of $\fpt$, whose study might well 
lead to deeper theory.

In the CDLS framework, the basic object is a language
consisting of pairs of strings (called a \emph{language of pairs}),
and one aims to understand when a compilation can be applied
to the first entry of each pair so as to allow for
efficient decision.  
This is a point of difference with our framework, but note that
the notions from our framework can be readily applied
to the languages of pairs that CDLS study by using the
parameterization $\pi_1$ that returns the first entry of a pair.
Another point of difference between our framework and theirs
is that their analog of our compilation function $c$
is not required to be computable;
while this makes
the negative results stronger, 
in our view there ought to be a focus on positive results,
which are rendered less meaningful without the
computability requirement.
(Actually, we are not aware of 
any natural computable problem for which the presence
or absence of this requirement makes a difference.)
Although these differences may appear slight,
by initiating our theory with our particular choice
of definitions,
we are able to position our framework within the 
language and tradition
of parameterized complexity and
relate our notions to existing ideas in parameterized complexity.
For instance, although not difficult,
we can directly relate the notion of a 
\emph{polynomial kernelization} to
the classes $\chopped{\fancyc}$
(Proposition~\ref{prop:polynomial-kernelization})
and use this relationship to observe 
the $\chopped{\np}$-completeness of the standard
parameterization of the 
hitting set problem for hypergraphs of bounded edge size
(see Theorem~\ref{thm:chopped-np-completeness}).
We also believe that the theory that results 
from our framework's definitions
witnesses that working with parameterized problems
as opposed to languages of pairs allows for
greater flexibility and smoother formulation 
(consider, for example, 
the characterization of $\chopped{\fancyc}$ using
the \emph{length parameterization} given by
Proposition~\ref{prop:chopped-characterization}).

Our framework and that of CDLS also differ later in the
respective developments.
Notably, 
our notion of reduction 
(Definition~\ref{def:poly-comp-reduction})
is readily
seen to be a restricted 
version of the usual fpt many-one reduction in
parameterized complexity,
and
we believe that our notion of reduction
is conceptually simpler to comprehend
than that of CDLS~\cite[Definition 2.8]{CDLS02-preprocessing}.
Despite these differences---and we view this as crucial---we
demonstrate how classification
results obtained in the CDLS framework
can be formulated and obtained in our framework;
this is made precise and performed in Section~\ref{sect:cdls}.

The presentation and development of our framework
may thus be viewed as playing a unifying role,
integrating together parameterized complexity and compilation.
Our choices of definitions and in formulation
allows us to directly relate the resulting concepts
to the theory of parameterized complexity.
At the same time, we believe that these concepts
capture in an essential way the core mathematical content 
and the core ideas of the CDLS framework
(as borne out by our results and discussion in Section~\ref{sect:cdls}).

\paragraph{\bf Related work.}
The CDLS framework was deployed
after its introduction
to analyze the compilability of reasoning tasks,
see for example~\cite{Liberatore02-size-of-mdp,LiberatoreSchaerf07-compilability-abduction}.

In the context of compiling propositional formulas,
a notion of compilation whereby a compiled version should
have the same models as the original formula was studied,
for example
by Gogic et al.~\cite{GKPS95-comparative-linguistics}
and by Darwiche and Marquis~\cite{DarwicheMarquis02-knowledge-compilation-map};
see also the recent work by Bova et al.~\cite{BCMS14-expander-cnfs-exponential-dnnf}.

Variants of the CDLS framework that relaxed the 
requirement that the size of compilations be polynomial
were also studied~\cite{Chen03-average-case-compilability,Chen05-parameterized-compilability}.

Finally, we mention that 
Fan, Geerts, and Neven~\cite{FanGeertsNeven13-big-data-preprocessing}
also developed a framework for classifying problems according
to compilability, with a focus on efficient
parallel processing (modelled using the complexity class $\nc$)
following a polynomial-time compilation.
We believe that it may be of interest to better understand
and develop
the relationship between our framework and theirs.
While we leave such a study to future work,
we mention that their notion of \emph{$\Pi$-tractability}
on a language $Q$ of pairs
can be 
described using our framework.\footnote{
   Precisely, a language $Q$ of pairs being 
\emph{$\Pi$-tractable} can be
verified to be equivalent to 
the parameterized problem $(Q, \pi_1)$ 
being in our class $\polycomp{\nc}$
via a poly-compilable function $g(x,y) = f(c(\pi_1(x,y)),(x,y)) = 
f(c(x),(x,y))$ where $c$ is polynomial-time computable.
}

\section{Preliminaries}

Throughout, $\pi_i$ denotes the operator that, given a tuple,
returns the $i$th entry of the tuple.

When $T$ is a set, we use $\wpfin(T)$ to denote the
set $\{ S \subseteq T ~|~ S \textup{ is finite} \}$.

We generally use $\Sigma$ to denote the alphabet over which
strings are formed, and generally assume $\{ 0, 1 \} \subseteq \Sigma$.
As is standard, we freely interchange between
elements of $\Sigma^*$ and $\Sigma^* \times \Sigma^*$.
When $k \geq 0$, we use $\Sigma^{\leq k}$
to denote the set of strings in $\Sigma^*$ of length
less than or equal to $k$.
For $n \in \N$, we use $\un(n)$ to denote
its unary encoding $1^n$ as a string.

We assume that languages under discussion
are non-trivial, that is, not equal to $\emptyset$ nor $\Sigma^*$.
We use $\ptime$ to denote the set of all languages decidable in 
polynomial time,
and $\fptime$ to denote the set of all functions
from $\Sigma^*$ to $\Sigma^*$ that are computable in polynomial time.

Here, by a \emph{parameterization}, we refer
to a map from $\Sigma^*$ to $\Sigma^*$.
Relative to a parameterization $\kappa: \Sigma^* \to \Sigma^*$,
it is typical to refer to $\kappa(x)$ as the \emph{parameter}  
of the string $x$.
While it is typical in the literature to define
a parameterization to be a map from $\Sigma^*$ to $\N$,
in this article we want to apply 
compilation functions to parameters and
discuss the \emph{length} of the results,
and we find that this is facilitated in many cases
by permitting the parameter of a string to be a string itself.
Throughout, we employ the following assumption
(which is discussed below in Remark~\ref{remrk:polycomp-ptime}).
\begin{assumption}
\label{assumption:param-polytime-comp}
Each parameterization is
polynomial-time computable, that is, in $\fptime$.
\end{assumption}
We use $\len$ to denote the parameterization
defined by $\len(x) = \un(|x|)$.
A \emph{parameterized problem} is a pair $(Q, \kappa)$
consisting of a language $Q$ and a parameterization $\kappa$.

By a \emph{classical complexity class}, we refer to 
a set of computable languages.
For a classical complexity class $\fancyc$,
we define $\para{\fancyc}$
to be the set that contains a parameterized problem
$(Q, \kappa)$ if there exists a computable function
$c: \Sigma^* \to \Sigma^*$, and a language 
$Q' \subseteq \Sigma^* \times \Sigma^*$ in $\fancyc$
such that, for each string $x \in \Sigma^*$, it holds that
$x \in Q \Leftrightarrow (c(\kappa(x)),x) \in Q'$.
We define $\fpt$ to be $\para{\ptime}$
(although this is perhaps not the usual definition of
$\fpt$, it is equivalent~\cite[Theorem 1.37]{FlumGrohe06-parameterizedcomplexity}).

As usual, when $\fancyd$ is a set of problems
(that is, a set of either languages or parameterized problems),
we say that a problem $P'$ is \emph{$\fancyd$-hard}
under a notion of reduction if each $P$ in $\fancyd$
reduces to $P'$; if in addition $P' \in \fancyd$,
we say that $P'$ is \emph{$\fancyd$-complete}.
We say that $\fancyd$ is \emph{closed} under a notion of reduction
if, when $P$ reduces to $P'$ and $P' \in \fancyd$,
it holds that $P \in \fancyd$.


\section{Framework}

\subsection{Problem classes}

In this subsection, we introduce the 
complexity classes of our framework.
We begin by introducing two basic definitions.
By a \emph{length function}, we refer to
a function from $\N$ to $\N$.

\begin{definition}
\normalfont
Let $\fancyl$ be a set of length functions.
\begin{itemize}

\item A function $c: \Sigma^* \to \Sigma^*$ is said to be
$\fancyl$-length if there exists $\ell \in \fancyl$
such that
for each $x \in \Sigma^*$,
it holds that $|c(x)| \leq \ell(|x|)$.

\item A function $g: \Sigma^* \to \Sigma^*$ is
\emph{$\fancyl$-compilable} 
with respect to a parameterization $\kappa$
if there exist $f \in \fptime$ and
a computable, $\fancyl$-length function $c: \Sigma^* \to \Sigma^*$
such that (for each $x \in \Sigma^*$) 
$g(x) = f(c(\kappa(x)),x)$.

\end{itemize}
\end{definition}

Put informally, a function $g$ is $\fancyl$-compilable if,
when one has the result
of applying $c$ to the parameter of an instance $x$,
the value $g(x)$ can be efficiently computed.
The function $c$ can be thought of as performing
a precomputation or compilation of the parameter.
Here, we do not place any restriction on the computational resources
needed to compute $c$, other than requiring that 
$c$ is computable.  
We view the requirement that $c$ be computable as natural
in terms of claiming positive results,
as we find it hard to argue that a non-computable compilation
would actually be usable.
We do restrict the length of $c$
according to $\fancyl$; we will be most interested in the case
where the length of $c$ is polynomially bounded.

With this terminology in hand, we can now 
define our first classes of parameterized problems.

\begin{definition}
\normalfont
Let $\fancyl$ be a set of length functions,
and let $\fancyc$ be a classical complexity class.
\begin{itemize}

\item We say that 
a parameterized problem $(Q, \kappa)$ 
is \emph{$\fancyl$-compilable to $\fancyc$}
if
there exists a  function $g: \Sigma^* \to \Sigma^*$
that is $\fancyl$-compilable (with respect to $\kappa$)
and a language $Q' \in \fancyc$
such that (for each $x \in \Sigma^*$)
$x \in Q \Leftrightarrow g(x) \in Q'$.

\item We define $\lcompc$ to be the 
set that contains each parameterized problem that
is $\fancyl$-compilable to $\fancyc$.
\end{itemize}
When $\fancyl$ is the set of all polynomials on $\N$,
we define $\polycomp{\fancyc}$ as $\lcompc$
and speak, for instance, of \emph{poly-compilability};
similarly, when $\fancyl$ is the set 
$\cup \{ O(2^{p}) ~|~ p \textup{ is a polynomial} \}$
of \emph{exponential functions},
we define $\expcomp{\fancyc}$ as $\lcompc$
and speak, for instance, of \emph{exp-compilability}.
\end{definition}

\begin{remrk}
\label{remrk:polycomp-ptime}
\normalfont
In this paper,
the smallest class that we will consider is
$\polycomp{\ptime}$, and we will regard an inclusion result
in this class as the most positive result 
demonstrable on a parameterized problem.
Suppose that a parameterized problem 
$(Q, \kappa)$ is in $\polycomp{\ptime}$
via $g(x) = f(c(\kappa(x)),x)$ and $Q'$.
One way to intuitively interpret this inclusion is as follows.
Suppose that the value $c(k)$ is known 
for parameter values $k$ in a limited range.
Then, for each instance $x \in \Sigma^*$ 
having parameter value $\kappa(x)$ in that limited range,
whether or not $x \in Q$ can be determined efficiently,
by applying the efficiently computable function
$f$ to $(c(\kappa(x)),x)$ and then by invoking an efficient
decision procedure for $Q'$.
Indeed, our intention here is to model the notion of
efficient decidability modulo knowledge of $c$;
this is why we put into effect
Assumption~\ref{assumption:param-polytime-comp}.
\end{remrk}

We observe the following upper bound on each class
$\lcompc$, which in particular indicates
that $\polycomp{\ptime} \subseteq \fpt$.

\begin{prop}
\label{prop:lcompc-in-parac}
Let $\fancyl$ be a set of length functions,
and let $\fancyc$ be a classical complexity class
that is closed under many-one polynomial-time reduction.
It holds that $\lcompc \subseteq \para{\fancyc}$.
\end{prop}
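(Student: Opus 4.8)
The plan is to reuse the very same compilation function $c$ and simply push the application of $f$ into the target language. Suppose $(Q,\kappa) \in \lcompc$. Unwinding the definitions, there is a function $g: \Sigma^* \to \Sigma^*$ that is $\fancyl$-compilable with respect to $\kappa$ --- say $g(x) = f(c(\kappa(x)),x)$ with $f \in \fptime$ and $c$ a computable, $\fancyl$-length function --- together with a language $Q' \in \fancyc$ such that $x \in Q \Leftrightarrow g(x) \in Q'$ for all $x$. To witness $(Q,\kappa) \in \para{\fancyc}$ I would keep $c$ in the role of the required computable function, and take as the new target language $Q'' = \{ (w,x) \in \Sigma^* \times \Sigma^* ~|~ f(w,x) \in Q' \}$.

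With this choice the equivalence demanded by the definition of $\para{\fancyc}$ is immediate: for every $x$, $(c(\kappa(x)),x) \in Q'' \Leftrightarrow f(c(\kappa(x)),x) \in Q' \Leftrightarrow g(x) \in Q' \Leftrightarrow x \in Q$. So the only remaining point is to verify $Q'' \in \fancyc$. For this I would observe that $(w,x) \mapsto f(w,x)$ is a polynomial-time many-one reduction from $Q''$ to $Q'$: it is polynomial-time computable because $f \in \fptime$, and by the very definition of $Q''$ it sends a pair into $Q'$ exactly when that pair lies in $Q''$. Since $Q' \in \fancyc$ and $\fancyc$ is closed under many-one polynomial-time reduction, $Q'' \in \fancyc$, which completes the argument. (In passing one notes $Q''$ is non-trivial, since by the displayed equivalence a trivial $Q''$ would force $Q$ to be trivial, so the appeal to closure is unproblematic.) The ``in particular'' clause then follows by instantiating $\fancyl$ as the set of polynomials and $\fancyc$ as $\ptime$, which is closed under polynomial-time many-one reductions, together with the fact that $\fpt = \para{\ptime}$ by definition.

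I do not expect a genuine obstacle: the statement is essentially a bookkeeping exercise that reorganizes the data witnessing membership in $\lcompc$ into data witnessing membership in $\para{\fancyc}$. The single step that actually uses a hypothesis is the verification that $Q'' \in \fancyc$, and there the closure of $\fancyc$ under polynomial-time many-one reductions is precisely what is required --- and it is genuinely required, since the compiled function $g$ postprocesses the pair $(c(\kappa(x)),x)$ through the polynomial-time map $f$ before testing membership in $Q'$, and absorbing that postprocessing is exactly what keeps the target language inside $\fancyc$.
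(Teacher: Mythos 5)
Your proposal is correct and follows essentially the same route as the paper's own proof: both define the new target language as the preimage $\{(w,x) \mid f(w,x) \in Q'\}$, observe it reduces to $Q'$ via the polynomial-time map $f$ so closure of $\fancyc$ applies, and keep $c \circ$ (nothing) --- i.e.\ the same compilation function $c$ --- as the witness for $\para{\fancyc}$. The extra remarks on non-triviality and the $\fpt$ instantiation are harmless additions.
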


\begin{proof}
Suppose that $(Q, \kappa)$ is $\fancyl$-compilable to $\fancyc$
via $g(x) = f(c(\kappa(x)),x)$ and $Q'' \in \fancyc$,
so that $x \in Q \Leftrightarrow g(x) \in Q''$.
Define $Q' = \{ (a,b) ~|~ f(a,b) \in Q'' \}$.
The language $Q'$ is many-one polynomial-time reducible
to $Q''$ via $f$, so $Q' \in \fancyc$.
We have $x \in Q \Leftrightarrow (c(\kappa(x)),x) \in Q'$,
implying that $Q \in \para{\fancyc}$.
\end{proof}

We now define a family of complexity classes
which will be used to classify parameterized problems
in $\fpt$ according to their compilability,
and in particular to give
evidence of non-inclusion in $\polycomp{\ptime}$,
via hardness results.

\begin{definition}
For each classical complexity class $\fancyc$,
we define $\chopped{\fancyc}$ 
to be the set that contains each parameterized problem
$(Q, \kappa)$ that is in $\polycomp{\fancyc}$ via
a function $g$ for which there exists a polynomial $p: \N \to \N$
such that
(for each $x \in \Sigma^*$)
$|g(x)| \leq p(|\kappa(x)|)$.
\end{definition}

For the sake of understanding this definition,
let us call the restriction of a language $Q'$ to
$Q' \cap \Sigma^{\leq k}$ the \emph{chop having magnitude $k$}
of $Q'$.  
Then, intuitively speaking, a problem is in $\chopped{\fancyc}$
if it is in $\polycomp{\fancyc}$ via $g$ and $Q'$
where $g(x)$ accesses only a chop (of $Q'$) having magnitude
restricted by a polynomial in the parameter of $x$.
The following proposition is clear from the definition
of $\chopped{\fancyc}$.
\begin{prop}
\label{prop:choppedc-in-polycompc}
For each classical complexity class $\fancyc$, it holds that
$$\chopped{\fancyc} \subseteq \polycomp{\fancyc}.$$
\end{prop}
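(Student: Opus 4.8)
The plan is simply to unwind the definition of $\chopped{\fancyc}$ and observe that it is, by construction, a subclass of $\polycomp{\fancyc}$. Concretely, suppose $(Q,\kappa) \in \chopped{\fancyc}$. By definition this means $(Q,\kappa)$ is already in $\polycomp{\fancyc}$ via some function $g$ (and some $Q' \in \fancyc$ with $x \in Q \Leftrightarrow g(x) \in Q'$), where $g$ additionally satisfies $|g(x)| \leq p(|\kappa(x)|)$ for some polynomial $p$. The very same witnesses $g$ and $Q'$ then place $(Q,\kappa)$ in $\polycomp{\fancyc}$: membership in $\chopped{\fancyc}$ imposes the extra polynomial length bound as a \emph{restriction} on the admissible $g$, so forgetting this bound can only enlarge the class of witnesses and hence preserves membership. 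This gives $(Q,\kappa) \in \polycomp{\fancyc}$, and since $(Q,\kappa)$ was arbitrary, $\chopped{\fancyc} \subseteq \polycomp{\fancyc}$.

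There is no real obstacle here: the containment is purely definitional, and the only point worth stating explicitly is the direction of the extra condition (it is a constraint on $g$, not a weakening). One could even present the argument in a single sentence, which is why the surrounding text flags the proposition as clear from the definition.
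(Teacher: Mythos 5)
Your proposal is correct and matches the paper exactly: the paper gives no explicit proof, noting only that the proposition "is clear from the definition of $\chopped{\fancyc}$," which is precisely the definitional unwinding you carry out. Nothing further is needed.
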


We also have the following upper bound on 
$\chopped{\fancyc}$, which shows that the classes
$\chopped{\fancyc}$ constitute a stratification of
the class $\fpt$.

\begin{prop}
For each classical complexity class $\fancyc$, it holds that
$$\chopped{\fancyc} \subseteq \expcomp{\ptime},$$
and hence that $\chopped{\fancyc} \subseteq \fpt$
(by Proposition~\ref{prop:lcompc-in-parac}).
\end{prop}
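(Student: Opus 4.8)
The plan is to take a problem $(Q,\kappa)\in\chopped{\fancyc}$, dismantle the data witnessing this membership, and reassemble it into a witness for membership in $\expcomp{\ptime}$. So fix a poly-compilable $g(x)=f(c(\kappa(x)),x)$, a language $Q'\in\fancyc$ with $x\in Q\Leftrightarrow g(x)\in Q'$, and a polynomial $p$ such that $|g(x)|\le p(|\kappa(x)|)$ for all $x$. The key observation is that membership of $x$ in $Q$ is determined solely by whether $g(x)$ lies in $Q'$, and since $|g(x)|\le p(|\kappa(x)|)$, the only part of $Q'$ that can matter is the chop of $Q'$ having magnitude $p(|\kappa(x)|)$. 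That chop is a \emph{finite} object, which I would record as a string $D_k$ listing the characteristic bits of $Q'$ over all strings of length at most $p(|k|)$ in, say, length-lexicographic order. The number of such strings is $O(|\Sigma|^{p(|k|)})=O(2^{q(|k|)})$ for a suitable polynomial $q$, so $D_k$ has length exponential in $|k|$. Moreover $D_k$ is computable: $c$ is computable by hypothesis, and $Q'$, being a member of the classical complexity class $\fancyc$, is by convention a computable (hence decidable) language, so the relevant bits can be produced by a halting procedure --- no running-time bound is needed here, only computability.

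Concretely, I would let $c'(k)=(c(k),D_k)$; by the previous paragraph $c'$ is computable and of exponential length, so it is an admissible compilation function for $\expcomp{\ptime}$. I would set $g'(x)=(D_{\kappa(x)},f(c(\kappa(x)),x))$, which is the value on input $(c'(\kappa(x)),x)$ of a polynomial-time function $f'$ that parses $c'(\kappa(x))$ into its two components, applies $f$ to $(c(\kappa(x)),x)$ to recover $g(x)$, and outputs the pair $(D_{\kappa(x)},g(x))$; this runs in time polynomial in $|c'(\kappa(x))|+|x|$ because $|c(\kappa(x))|\le|c'(\kappa(x))|$, $f\in\fptime$, and copying $D_{\kappa(x)}$ costs $O(|c'(\kappa(x))|)$. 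Finally I would take $Q''$ to be the language of pairs $(D,s)$ for which the table $D$ marks the string $s$ as a member; membership in $Q''$ is decidable in time polynomial in $|D|+|s|$, so $Q''\in\ptime$. Since $|g(x)|\le p(|\kappa(x)|)$ forces $g(x)$ into the chop tabulated by $D_{\kappa(x)}$, we get $x\in Q\Leftrightarrow g(x)\in Q'\Leftrightarrow g'(x)\in Q''$, witnessing $(Q,\kappa)\in\expcomp{\ptime}$. The ``hence'' clause is then immediate from Proposition~\ref{prop:lcompc-in-parac}: that proposition applies with $\fancyc=\ptime$, which is closed under many-one polynomial-time reduction, so $\expcomp{\ptime}\subseteq\para{\ptime}=\fpt$.

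The only delicate point --- and it is care rather than genuine difficulty --- is to keep the two size regimes from colliding: $D_k$ is permitted to be exponentially long (fine, since the target class is $\expcomp{\ptime}$), its construction is allowed to be merely computable rather than efficient (fine, since that is all a classical complexity class guarantees of $Q'$), yet $f'$ must remain polynomial-time (fine, since $f$ is only ever applied to the short component $c(\kappa(x))$ of its input). I would also make sure the chosen encoding of $D_k$ supports polynomial-time membership lookups; any standard length-lexicographic bit-vector encoding does. With those points handled, the argument is routine.
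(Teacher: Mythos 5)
Your proof is correct and takes essentially the same route as the paper: both tabulate the chop of $Q'$ of magnitude $p(|\kappa(x)|)$ into an exponential-length, computable bit-table included in the compilation, and then perform a polynomial-time lookup of $g(x)$ in that table. The only cosmetic difference is that the paper's post-compilation function outputs a fixed yes/no string decided by a fixed $P \in \ptime$, whereas you output the pair $(D_{\kappa(x)}, g(x))$ and delegate the lookup to a $\ptime$ language of pairs; both are valid.
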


\begin{proof}
We prove that
$\chopped{\fancyc} \subseteq \expcomp{\ptime}$.
Fix 
$x_N, x_Y \in \Sigma^*$
and
$P \in \ptime$ such that $x_Y \in P$ and $x_N \notin P$.
Assume that $(Q, \kappa)$ is in $\chopped{\fancyc}$
via $g(x) = f(c(\kappa(x)), x)$,
the polynomial $p$,
 and $Q' \in \fancyc$.
Let $c_1^+: \Sigma^* \to \Sigma^*$ be the function 
computed by the algorithm that, given $k \in \Sigma^*$,
loops over each string $y$ in $\Sigma^{\leq p(|k|)}$
and, for each such string $y$, outputs $1$ or $0$
depending on whether or not $y \in Q'$;
thus, $|c_1^+(k)| = |\Sigma^{\leq p(|k|)}|$.
Define $c^+(k) = (c_1^+(k),c(k),k)$.
Let $f^+$ be a function
computed by a polynomial-time algorithm
that, given a string
$((d_1,d,k),x)$ where $d_1$ is a string over $\{ 0, 1 \}$
of length $|\Sigma^{\leq p(|k|)}|$,
computes $x' = f(d,x)$,
computes the bit $b$ of $d_1$ corresponding to $x'$
(whenever $|x'| \leq p(|k|)$),
and outputs $x_Y$ or $x_N$
depending on whether or not $b=1$ or $b=0$.
The function $g^+(x) = f^+(c^+(\kappa(x)),x)$
witnesses that $(Q, \kappa)$ is exp-compilable to $\ptime$:
We have that $x \in Q$
iff $f(c(\kappa(x)),x) \in Q'$
iff $f^+((c_1^+(\kappa(x),c(\kappa(x)),\kappa(x)),x) = x_Y$
iff $f^+(c^+(\kappa(x)),x) \in P$.
\end{proof}

We observe that our base class $\polycomp{\ptime}$
coincides with the
class 
$\chopped{\ptime}$, which
is the smallest class that we will consider from
the hierarchy of classes $\chopped{\fancyc}$.

\begin{prop}
\label{prop:chopped-ptime-equals-polycomp-ptime}
$\chopped{\ptime} = \polycomp{\ptime}$.
\end{prop}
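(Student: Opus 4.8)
The plan is to prove the two inclusions separately. The inclusion $\chopped{\ptime} \subseteq \polycomp{\ptime}$ is immediate from Proposition~\ref{prop:choppedc-in-polycompc}, so the real content is the reverse inclusion $\polycomp{\ptime} \subseteq \chopped{\ptime}$.

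To establish this, I would start from a parameterized problem $(Q, \kappa) \in \polycomp{\ptime}$, witnessed by a poly-compilable function $g(x) = f(c(\kappa(x)), x)$ (with $f \in \fptime$ and $c$ computable and polynomially length-bounded) together with a language $Q' \in \ptime$ satisfying $x \in Q \Leftrightarrow g(x) \in Q'$. The only obstacle to membership in $\chopped{\ptime}$ is that $|g(x)|$ need not be bounded by any polynomial in $|\kappa(x)|$; indeed $g(x)$ may grow with $|x|$. The key observation is that, since $Q'$ is decidable in polynomial time, the test ``$g(x) \in Q'$?'' can be folded into the efficiently-computable part of a new witness, so that the new witness emits only a single bit, encoded as one of two fixed strings.

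Concretely, I would fix distinct strings $x_Y, x_N \in \Sigma^*$ and a language $P \in \ptime$ with $x_Y \in P$ and $x_N \notin P$ (for example $P = \{ x_Y \}$ with $x_Y \neq x_N$, as in the preceding proof). Define $f^*$ to be computed by the polynomial-time algorithm that, on input $(a, x)$, computes $f(a, x)$, decides whether $f(a, x) \in Q'$, and outputs $x_Y$ if so and $x_N$ otherwise; this lies in $\fptime$ because $f \in \fptime$ and $Q' \in \ptime$. Keeping the same compilation function $c$, set $g^*(x) = f^*(c(\kappa(x)), x)$, which is poly-compilable with respect to $\kappa$. Then $g^*(x) \in \{ x_Y, x_N \}$, so $|g^*(x)| \leq \max(|x_Y|, |x_N|)$, a constant, which is trivially bounded by a (constant) polynomial in $|\kappa(x)|$; and $x \in Q \Leftrightarrow g(x) \in Q' \Leftrightarrow g^*(x) = x_Y \Leftrightarrow g^*(x) \in P$. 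Hence $(Q, \kappa) \in \chopped{\ptime}$, via $g^*$, a constant polynomial bound, and $P$.

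I do not expect a genuine obstacle here; the one point worth isolating is the recognition that polynomial-time decidability of $Q'$ is precisely what allows the witness's output to be shrunk to constant size, and that doing so does not require altering the compilation function $c$ at all.
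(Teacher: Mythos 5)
Your proposal is correct and is essentially the paper's own argument: both fold the polynomial-time test for $Q'$ into the $\fptime$ part of the witness so that the output collapses to one of two fixed strings $x_Y, x_N$ distinguished by a language $P \in \ptime$. Your $f^*$ is just the composition $h \circ f$ that the paper writes as $h(g)$, so the two proofs coincide.
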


\begin{proof}
The $\subseteq$ direction follows from
Proposition~\ref{prop:choppedc-in-polycompc}.
For the $\supseteq$ direction,
suppose that $(Q, \kappa) \in \polycomp{\ptime}$.
Then, there exists a function $g$
that is a poly-compilable with respect to $\kappa$
and $Q' \in \ptime$
such that $x \in Q \Leftrightarrow g(x) \in Q'$.
Fix 
$x_N, x_Y \in \Sigma^*$
and
$P \in \ptime$ such that $x_Y \in P$ and $x_N \notin P$.
Let $h \in \fptime$ be a function such that,
for all $x \in \Sigma^*$,
it holds that $x \in Q'$ implies $h(x) = x_Y$
and that $x \notin Q'$ implies $h(x) = x_N$.
Then, the mapping $h(g)$ witnesses that $(Q, \kappa)$
is in $\chopped{\ptime}$:
$$x \in Q \Leftrightarrow g(x) \in Q' \Leftrightarrow h(g(x)) \in P$$
and for all strings $x \in \Sigma^*$,
it holds that $|h(g(x))| \leq \max(|x_Y|, |x_N|)$.
\end{proof}

The classes $\chopped{\fancyc}$ can be directly related
to kernelization in the following way.
Here, we say that a parameterized problem $(Q, \kappa)$
has a \emph{polynomial kernelization}
if there exists a polynomial-time computable
function $K: \Sigma^* \to \Sigma^*$ 
and a polynomial $p: \N \to \N$
such that 
(for each $x \in \Sigma^*$)
$x \in Q \Leftrightarrow K(x) \in Q$
and $|K(x)| \leq p(|\kappa(x)|)$.

\begin{prop}
\label{prop:polynomial-kernelization}
Suppose that a parameterized problem $(Q, \kappa)$
has a polynomial kernelization and $\fancyc$
is a classical complexity class such that $Q \in \fancyc$.
Then, the problem $(Q, \kappa)$ is in $\chopped{\fancyc}$.
\end{prop}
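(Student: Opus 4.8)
The plan is to use the kernelization function $K$ itself as the compilation function's ``finishing'' map $f$, while letting the compilation function $c$ be trivial (say, constant). The key observation is that $K(x)$ is already polynomially bounded in $|\kappa(x)|$, which is exactly the size bound required for membership in $\chopped{\fancyc}$, and that $K$ is polynomial-time computable, which qualifies it as an $f \in \fptime$. Since $\fancyl$-compilability only requires $g(x) = f(c(\kappa(x)), x)$ for some $f \in \fptime$ and computable $\fancyl$-length $c$, we may take $c$ to be any fixed polynomial-time computable (hence computable, hence $\fancyl$-length for the constant length function) map, for instance $c(k) = \epsilon$ or $c(k) = 0$; then set $g(x) = f(c(\kappa(x)), x) := K(x)$, ignoring the first argument.

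First I would fix such a trivial $c$ and define $g(x) = K(x)$, noting $g$ is poly-compilable with respect to $\kappa$ (the $\fancyl$-length requirement on $c$ is met since $|c(k)|$ is bounded by a constant, hence by a polynomial). Next I would take $Q' = Q$, which lies in $\fancyc$ by hypothesis. Then the kernelization property $x \in Q \Leftrightarrow K(x) \in Q$ gives exactly $x \in Q \Leftrightarrow g(x) \in Q'$, so $(Q,\kappa) \in \polycomp{\fancyc}$ via $g$ and $Q'$. Finally, the bound $|K(x)| \leq p(|\kappa(x)|)$ is precisely the extra condition $|g(x)| \leq p(|\kappa(x)|)$ required in the definition of $\chopped{\fancyc}$, so $(Q,\kappa) \in \chopped{\fancyc}$.

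There is essentially no obstacle here; the proof is a direct unwinding of the definitions. The only point demanding a moment's care is checking that the definition of $\fancyl$-compilability genuinely permits $f$ to disregard its compiled argument $c(\kappa(x))$ and that a constant-length $c$ is admissible as a ``poly-compilable'' (i.e., polynomial-length) function. Both are immediate: $f$ is an arbitrary $\fptime$ function of the pair $(c(\kappa(x)), x)$, so it may depend on $x$ alone, and any constant function is polynomially bounded in length. Hence the claim follows.
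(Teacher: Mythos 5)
Your proposal is correct and follows essentially the same route as the paper's own proof: the paper also takes $g = K$ with $Q' = Q$, observing that $K$ is poly-compilable (you merely make the trivial compilation function $c$ explicit, which the paper leaves implicit) and that the kernel size bound $|K(x)| \leq p(|\kappa(x)|)$ is exactly the extra condition defining $\chopped{\fancyc}$. No issues.
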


\begin{proof}
We have that $(Q, \kappa)$ in
$\polycomp{\fancyc}$ via $K$ (the function
from the definition of polynomial kernelization),
since 
$x \in Q \Leftrightarrow K(x) \in Q$.
Moreover, it holds that there exists a polynomial $p$
such that $|K(x)| \leq p(|\kappa(x)|)$
by the definition of polynomial kernelization.
\end{proof}

\subsection{Reduction}
\label{subsect:reduction}

We now introduce a notion of reduction
for comparing the compilability of parameterized problems.

\begin{definition}
\label{def:poly-comp-reduction}
We say that a parameterized problem $(Q, \kappa)$
\emph{poly-comp reduces}
to another parameterized problem $(Q', \kappa')$
if 
there exists a function $g: \Sigma^* \to \Sigma^*$
that is poly-compilable with respect to $\kappa$
and a 
poly-length, computable function $s: \Sigma^* \to \wpfin(\Sigma^*)$
such that (for each $x \in \Sigma^*$)
it holds that $x \in Q \Leftrightarrow g(x) \in Q'$
and that $\kappa'(g(x)) \in s(\kappa(x))$.
\end{definition}

The notion of poly-comp reduction can be viewed
as a restricted version of fpt many-one reduction.
(Consider, for example,
the definition given by 
Flum and Grohe~\cite[Definition 2.1]{FlumGrohe06-parameterizedcomplexity};
the function $g$ in Definition~\ref{def:poly-comp-reduction}
can be seen to be computable by a fpt-algorithm,
and the condition on the function $s$ ensures 
that their condition (3),
when reformulated for parameterizations of the type considered here,
holds.)

Note that, in Definition~\ref{def:poly-comp-reduction},
we assume that the set $s(x)$ is given according
to a standard representation that lists the strings therein;
hence, as a consequence of the assumption that $s$ is poly-length,
the size $|s(x)|$ of $s(x)$ is bounded above by a polynomial
in $|x|$.

We have the following two basic properties of poly-comp reduction.

\begin{theorem}
\label{thm:polycomp-compatible}
For each classical complexity class $\fancyc$,
it holds that $\polycomp{\fancyc}$ is closed under
poly-comp reduction.
\end{theorem}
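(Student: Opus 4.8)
The plan is to show directly that if $(Q,\kappa)$ poly-comp reduces to $(Q',\kappa')$ and $(Q',\kappa') \in \polycomp{\fancyc}$, then $(Q,\kappa) \in \polycomp{\fancyc}$. So suppose the reduction is witnessed by a poly-compilable $g$ (with respect to $\kappa$), say $g(x) = f(c(\kappa(x)),x)$ with $f \in \fptime$ and $c$ computable and poly-length, together with a poly-length computable $s: \Sigma^* \to \wpfin(\Sigma^*)$ satisfying $x \in Q \Leftrightarrow g(x) \in Q'$ and $\kappa'(g(x)) \in s(\kappa(x))$. Suppose further that $(Q',\kappa') \in \polycomp{\fancyc}$ via a poly-compilable $g'(y) = f'(c'(\kappa'(y)),y)$, with $f' \in \fptime$ and $c'$ computable and poly-length, and a language $R \in \fancyc$ with $y \in Q' \Leftrightarrow g'(y) \in R$. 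The goal is to build a single poly-compilable function $h$ with respect to $\kappa$ and a language in $\fancyc$ witnessing membership of $(Q,\kappa)$.

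The key idea is that the new compilation must precompute $c'$ not on a single parameter value but on the whole set $s(\kappa(x))$, which is fine because $s$ is poly-length (so $s(k)$ lists only polynomially many strings, each of polynomial length in $|k|$). Concretely, first I would define the compilation function $C(k) = (c(k),\, (c'(k_1),\dots,c'(k_m))\,)$, where $k_1,\dots,k_m$ is the list of strings comprising $s(k)$; since $c$ and each $c'$ are computable and poly-length, and $s$ is computable and poly-length, $C$ is computable and poly-length. Next I would define $h(x) = F(C(\kappa(x)),x)$ where $F$, given the pair $((d,(e_1,\dots,e_m)),x)$, does the following in polynomial time: it computes $y = f(d,x)$ (this equals $g(x)$), then it locates the index $j$ for which $\kappa'(y) = k_j$ (this exists because $\kappa'(g(x)) \in s(\kappa(x))$, and $\kappa'$ is polynomial-time computable so this search is efficient), and finally it outputs $f'(e_j,y)$. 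Since $e_j = c'(k_j) = c'(\kappa'(y)) = c'(\kappa'(g(x)))$, we get $h(x) = f'(c'(\kappa'(g(x))),g(x)) = g'(g(x))$. Hence $x \in Q \Leftrightarrow g(x) \in Q' \Leftrightarrow g'(g(x)) \in R \Leftrightarrow h(x) \in R$, and $R \in \fancyc$, so $(Q,\kappa) \in \polycomp{\fancyc}$.

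The one point requiring a little care is the poly-length bound on $C$: I must check that $|C(k)|$ is polynomial in $|k|$. The first component $c(k)$ is poly-length in $|k|$ by hypothesis. For the second component, $s(k)$ has at most polynomially-many entries $k_j$ in $|k|$, and each $|k_j|$ is polynomial in $|k|$ (again since $s$ is poly-length); and $c'$ is poly-length, so each $|c'(k_j)|$ is polynomial in $|k_j|$, hence polynomial in $|k|$; summing over the polynomially-many $j$ keeps the total polynomial in $|k|$. Composition of polynomials is a polynomial, so $|C(k)|$ is polynomial in $|k|$, as needed. I expect the main (minor) obstacle to be bookkeeping: making sure the encoding of tuples and finite sets is handled so that $F$ really runs in polynomial time — in particular that extracting $e_j$ from the list, given that $\kappa'$ is polynomial-time computable, is a polynomial-time operation — but this is routine once the representation of $s(\kappa(x))$ as an explicit list (noted after Definition~\ref{def:poly-comp-reduction}) is used. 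No computational-resource assumption on $c$ or $c'$ is needed beyond computability, since the expensive work is all folded into the (unbounded-time, poly-length) compilation $C$.
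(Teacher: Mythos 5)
Your construction is essentially the paper's own proof (which routes through Lemma~\ref{lemma:compose-poly-comp-functions}): precompile $c$ on $\kappa(x)$ together with $c'$ on every element of $s(\kappa(x))$, then let the polynomial-time part compute $g(x)$, look up the compiled value for $\kappa'(g(x))$, and apply $f'$. The one slip is that your table $(c'(k_1),\dots,c'(k_m))$ omits the keys, and $F$ cannot recover them by recomputing $s(\kappa(x))$ in polynomial time since $s$ is only assumed computable; storing the pairs $(k_j,c'(k_j))$ — exactly as the paper does — repairs this immediately.
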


\begin{theorem}
\label{thm:polycomp-reducibility-transitive}
Poly-comp reducibility is transitive.
\end{theorem}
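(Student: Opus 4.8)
The plan is to prove transitivity of poly-comp reducibility directly by composing the data witnessing the two given reductions. Suppose $(Q, \kappa)$ poly-comp reduces to $(Q', \kappa')$ via a poly-compilable function $g$ and a poly-length computable function $s \colon \Sigma^* \to \wpfin(\Sigma^*)$, and suppose $(Q', \kappa')$ poly-comp reduces to $(Q'', \kappa'')$ via a poly-compilable function $g'$ and a poly-length computable function $s' \colon \Sigma^* \to \wpfin(\Sigma^*)$. Writing $g(x) = f(c(\kappa(x)), x)$ and $g'(y) = f'(c'(\kappa'(y)), y)$ with $f, f' \in \fptime$ and $c, c'$ computable and poly-length, the natural candidate is the composition $h(x) = g'(g(x))$ together with the set-valued map $t(k) = \bigcup_{k' \in s(k)} s'(k')$. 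First I would verify the easy equivalence: $x \in Q \Leftrightarrow g(x) \in Q' \Leftrightarrow g'(g(x)) \in Q''$, so $x \in Q \Leftrightarrow h(x) \in Q''$.

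Next I would check the parameter-tracking condition. We need $\kappa''(h(x)) \in t(\kappa(x))$. By the second reduction, $\kappa''(g'(g(x))) \in s'(\kappa'(g(x)))$; by the first reduction, $\kappa'(g(x)) \in s(\kappa(x))$; hence $\kappa''(h(x)) \in s'(\kappa'(g(x))) \subseteq \bigcup_{k' \in s(\kappa(x))} s'(k') = t(\kappa(x))$, as desired. I would then confirm that $t$ is computable and poly-length: $s(k)$ has size polynomial in $|k|$ and each string in it has length polynomial in $|k|$ (since $s$ is poly-length, hence each output string is bounded and the standard list representation is poly-bounded, as noted in the remark following Definition~\ref{def:poly-comp-reduction}); applying $s'$ to each such string $k'$ yields a set of size polynomial in $|k'|$, hence polynomial in $|k|$, with strings of length polynomial in $|k|$; taking the union over polynomially many such sets keeps everything polynomial in $|k|$. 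Computability of $t$ follows from computability of $s$ and $s'$.

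The main obstacle—and the one part needing genuine care rather than bookkeeping—is showing that $h(x) = g'(g(x))$ is itself poly-compilable with respect to $\kappa$, i.e.\ that it can be written as $f^+(c^+(\kappa(x)), x)$ for some $f^+ \in \fptime$ and computable poly-length $c^+$. The difficulty is that $g'$ is compilable with respect to $\kappa'$, and the relevant parameter value is $\kappa'(g(x))$, which is not available to $c^+$ (which only sees $\kappa(x)$), nor is it a fixed function of $\kappa(x)$. The resolution is to exploit that $\kappa'(g(x))$ ranges over the finite, polynomially-bounded set $s(\kappa(x))$: define $c^+(k)$ to compute $c(k)$ together with, for every $k' \in s(k)$, the pair $(k', c'(k'))$—that is, a lookup table of $c'$ values indexed by all candidate parameters. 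This is poly-length because $s(k)$ is a polynomially-sized set of polynomially-long strings and $c, c'$ are poly-length; it is computable because $s$, $c$, $c'$ are. Then $f^+$, on input $(c^+(\kappa(x)), x)$, first computes $y = f(c(\kappa(x)), x) = g(x)$ in polynomial time, then computes $\kappa'(y)$ in polynomial time (using Assumption~\ref{assumption:param-polytime-comp}), looks up the corresponding value $c'(\kappa'(y))$ in the table (the lookup succeeds since $\kappa'(g(x)) \in s(\kappa(x))$), and finally outputs $f'(c'(\kappa'(y)), y) = g'(g(x)) = h(x)$; all steps are polynomial-time, so $f^+ \in \fptime$. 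This establishes that $h$ is poly-compilable with respect to $\kappa$, completing the verification that $h$ and $t$ witness a poly-comp reduction from $(Q, \kappa)$ to $(Q'', \kappa'')$.
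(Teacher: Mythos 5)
Your proposal is correct and follows essentially the same route as the paper: the paper's proof (via Lemma~\ref{lemma:compose-poly-comp-functions}) also composes the two reductions by packaging into the new compilation $c^+(k)$ both $c(k)$ and a lookup table $\{(k', c'(k')) : k' \in s(k)\}$, has $f^+$ compute $g(x)$, evaluate $\kappa'$ on the result, and retrieve the appropriate $c'$-value from the table, and takes the union $\bigcup_{k' \in s(k)} s'(k')$ as the new set-valued map. Your identification of the lookup-table idea as the crux, and your verification of the poly-length and parameter-tracking conditions, match the paper's argument.
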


We establish these theorems in the appendix
(Section~\ref{sect:reduction-proofs}).



We now give an alternative characterization of $\chopped{\fancyc}$
in terms of poly-comp reduction.

\begin{prop}
\label{prop:chopped-characterization}
Let $\fancyc$ be a classical complexity class.
A parameterized problem $(Q, \kappa)$ is in $\chopped{\fancyc}$
if and only if
there exists a language $Q' \in \fancyc$
such that $(Q, \kappa)$ poly-comp reduces to $(Q', \len)$.
\end{prop}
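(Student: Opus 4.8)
The plan is to prove the two implications separately, exploiting the fact that $\chopped{\fancyc}$ is defined by a poly-compilable function $g$ whose \emph{output length} is polynomially bounded in $|\kappa(x)|$, while $\len$ records exactly (the unary encoding of) a string's length. The key observation driving both directions is that, under the parameterization $\len$, the parameter $\len(g(x)) = \un(|g(x)|)$ is itself controlled by $|\kappa(x)|$ precisely when $|g(x)| \leq p(|\kappa(x)|)$.

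For the forward direction, suppose $(Q,\kappa) \in \chopped{\fancyc}$ via a poly-compilable function $g$, a polynomial $p$, and a language $Q' \in \fancyc$, with $|g(x)| \leq p(|\kappa(x)|)$ for all $x$. I would use the \emph{same} function $g$ and the \emph{same} language $Q'$ to witness a poly-comp reduction from $(Q,\kappa)$ to $(Q',\len)$. The condition $x \in Q \Leftrightarrow g(x) \in Q'$ is immediate. It remains to exhibit a poly-length, computable $s \colon \Sigma^* \to \wpfin(\Sigma^*)$ with $\len(g(x)) \in s(\kappa(x))$. Since $\len(g(x)) = \un(|g(x)|)$ and $|g(x)| \leq p(|\kappa(x)|)$, I set $s(k) = \{ \un(i) ~|~ 0 \leq i \leq p(|k|) \}$; this set has $p(|k|)+1$ elements, each of length at most $p(|k|)$, so $s$ is poly-length and clearly computable, and it contains $\un(|g(x)|)$ whenever $|k| = |\kappa(x)|$. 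Thus $(Q,\kappa)$ poly-comp reduces to $(Q',\len)$.

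For the converse, suppose there is $Q' \in \fancyc$ such that $(Q,\kappa)$ poly-comp reduces to $(Q',\len)$ via a poly-compilable $g$ and a poly-length computable $s$. I want to conclude $(Q,\kappa) \in \chopped{\fancyc}$, which requires a poly-compilable function witnessing membership in $\polycomp{\fancyc}$ with output length bounded by a polynomial in $|\kappa(x)|$. The reduction already gives $x \in Q \Leftrightarrow g(x) \in Q'$, so $g$ itself witnesses $(Q,\kappa) \in \polycomp{\fancyc}$ via $Q'$. The point is the length bound: the condition $\len(g(x)) \in s(\kappa(x))$ means $\un(|g(x)|)$ is one of the (at most polynomially many, each polynomially long) strings in $s(\kappa(x))$, so $|g(x)|$ — being the length of such a unary string — is bounded by a polynomial in $|\kappa(x)|$. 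Concretely, since $s$ is poly-length there is a polynomial $q$ with $|s(k)| \leq q(|k|)$, and every string in $s(\kappa(x))$ has length at most $q(|\kappa(x)|)$; as $\un(|g(x)|)$ is such a string, $|g(x)| \leq q(|\kappa(x)|)$. Hence $g$ and $Q'$ witness $(Q,\kappa) \in \chopped{\fancyc}$ with the polynomial $q$.

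The argument is essentially bookkeeping, so I do not anticipate a genuine obstacle; the one place to be careful is handling the possibility that $s(\kappa(x))$ contains strings other than unary encodings, or that $g$ is not well-defined on inputs whose parameter lies outside the relevant range — but since we only need the length bound to hold for \emph{all} $x \in \Sigma^*$ and the membership $\len(g(x)) \in s(\kappa(x))$ is guaranteed by the reduction for all $x$, and since the length of any member of $s(\kappa(x))$ is at most $q(|\kappa(x)|)$ regardless of its form, this causes no difficulty. I should also double-check that the poly-compilability of $g$ transfers verbatim between the two frameworks, which it does since the definition of poly-comp reduction uses exactly the same notion of poly-compilable function (with respect to $\kappa$) as the definition of $\polycomp{\fancyc}$.
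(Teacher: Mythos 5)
Your proof is correct and follows essentially the same route as the paper: forward, reuse $g$ and $Q'$ with $s(k)=\{\un(0),\dots,\un(p(|k|))\}$; backward, reuse $g$ and $Q'$ and extract the length bound on $|g(x)|$ from the poly-length condition on $s$. If anything, your backward direction is stated more cleanly than the paper's, since you bound $|g(x)|$ directly by a polynomial in $|\kappa(x)|$ (which is exactly what the definition of $\chopped{\fancyc}$ requires), whereas the paper detours through a bound in $|x|$.
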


\begin{proof}
For the forward direction, suppose $(Q, \kappa) \in \chopped{\fancyc}$.
There exists a function $g$ that is poly-compilable
with respect to $\kappa$,
a language $Q' \in \fancyc$,
and a polynomial $p$ such that:
for each $x \in \Sigma^*$,
it holds that $x \in Q$ iff $g(x) \in Q'$
and it holds that $|g(x)| \leq p(|\kappa(x)|)$.
Define $s(x) = \{ \un(0), \un(1), \ldots, \un(p(|\kappa(x)|)) \}$;
then, for each $x \in \Sigma^*$,
it holds that $\kappa'(g(x)) = \un(|g(x)|) \in s(x)$.
So, $(g,s)$ is a poly-comp reduction
from $(Q, \kappa)$ to $(Q', \len)$.

For the backward direction, 
suppose that $(Q, \kappa)$ poly-comp reduces
to $(Q', \len)$ via $(g,s)$, where $Q' \in \fancyc$.
We have, for each $x \in \Sigma^*$
that $x \in Q \Leftrightarrow g(x) \in Q'$
and that $\len(g(x)) \in s(\kappa(x))$.
But there exists a polynomial $p$ such that
for each $x \in \Sigma^*$ and each $y \in s(\kappa(x))$,
it holds that $|y| \leq p(|x|)$,
since $\kappa$ and $s$ are both poly-length.
Thus $\len(g(x)) = \un(|g(x)|) \in s(\kappa(x))$
 implies that
$|g(x)| = |\un(|g(x)|)| \leq p(|x|) = p(|\len(x)|)$.
\end{proof}

From the just-given characterization of
$\chopped{\fancyc}$, 
we may infer the following two results.

\begin{prop}
For each classical complexity class $\fancyc$,
the class $\chopped{\fancyc}$ is closed under poly-comp reduction.
\end{prop}

\begin{proof}
This is a consequence of 
Proposition~\ref{prop:chopped-characterization}
and
Theorem~\ref{thm:polycomp-reducibility-transitive}.
\end{proof}

\emph{When discussing a class $\chopped{\fancyc}$,
we assume by default that \emph{hardness}
 and \emph{completeness} are with respect to 
 poly-comp reducibility.}

\begin{prop}
\label{prop:complete-for-choppedc}
Let $\fancyc$ be a classical complexity class
and assume that $Q^+$ is $\fancyc$-complete
under many-one polynomial-time reduction.
Then, the parameterized problem $(Q^+, \len)$
is complete for $\chopped{\fancyc}$.
\end{prop}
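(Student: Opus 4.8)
The plan is to show the two obligations for completeness directly: membership of $(Q^+, \len)$ in $\chopped{\fancyc}$, and $\chopped{\fancyc}$-hardness of $(Q^+, \len)$ under poly-comp reduction. For membership, I would simply note that $Q^+ \in \fancyc$ (since it is $\fancyc$-complete, $\fancyc$ being a class of languages that contains $Q^+$) and invoke Proposition~\ref{prop:chopped-characterization}: it suffices to exhibit a poly-comp reduction from $(Q^+, \len)$ to $(Q', \len)$ for some $Q' \in \fancyc$, and the identity reduction (take $g$ the identity function, which is trivially poly-compilable, and $s(x) = \{ \un(|x|) \}$, which is poly-length since $s(x)$ contains a single string of length $|x|$) does the job with $Q' = Q^+$. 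Alternatively one could appeal to Proposition~\ref{prop:polynomial-kernelization} with the identity kernelization, since $|x| \leq |\len(x)|$ trivially; either route is routine.

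The substantive part is hardness. Let $(Q, \kappa)$ be an arbitrary problem in $\chopped{\fancyc}$; I must poly-comp reduce it to $(Q^+, \len)$. By Proposition~\ref{prop:chopped-characterization}, there is a language $Q' \in \fancyc$ and a poly-comp reduction $(g, s)$ from $(Q, \kappa)$ to $(Q', \len)$, so $x \in Q \Leftrightarrow g(x) \in Q'$ and $\un(|g(x)|) \in s(\kappa(x))$; since $\kappa$ and $s$ are poly-length, there is a polynomial $p$ with $|g(x)| \leq p(|x|)$. Now since $Q^+$ is $\fancyc$-complete under many-one polynomial-time reduction, fix $r \in \fptime$ with $y \in Q' \Leftrightarrow r(y) \in Q^+$ for all $y$, and let $q$ be a polynomial bounding the length of $r$, i.e. $|r(y)| \leq q(|y|)$. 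Set $g^+(x) = r(g(x))$. I would argue $g^+$ is poly-compilable with respect to $\kappa$: writing $g(x) = f(c(\kappa(x)), x)$ for $f \in \fptime$ and $c$ computable poly-length, we have $g^+(x) = r(f(c(\kappa(x)), x)) = f'(c(\kappa(x)), x)$ where $f'(a, b) = r(f(a, b)) \in \fptime$ as a composition of polynomial-time functions. For the correctness condition, $x \in Q \Leftrightarrow g(x) \in Q' \Leftrightarrow r(g(x)) \in Q^+ \Leftrightarrow g^+(x) \in Q^+$.

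It remains to define the set-valued function $s^+$ controlling the parameter $\len(g^+(x)) = \un(|g^+(x)|)$. The bound $|g^+(x)| = |r(g(x))| \leq q(|g(x)|) \leq q(p(|x|))$ holds (using monotonicity of $q$, or replacing $q$ by a monotone upper bound), so I set $s^+(x) = \{ \un(0), \un(1), \ldots, \un(q(p(|x|))) \}$, which is poly-length and computable. Then $\len(g^+(x)) \in s^+(\kappa(x))$ — here one must be slightly careful, since the reduction condition requires $\kappa'(g^+(x)) \in s^+(\kappa(x))$ where $\kappa' = \len$ and the argument of $s^+$ is the \emph{parameter} $\kappa(x)$, not $x$; so I would instead express the bound in terms of $|\kappa(x)|$, which is possible since the original reduction already guarantees $|g(x)|$ is bounded by a polynomial in $|\kappa(x)|$ (this is exactly what Proposition~\ref{prop:chopped-characterization}'s forward direction extracts, or one reads it off the $\chopped{}$ definition), and compose with $q$. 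The main obstacle I anticipate is precisely this bookkeeping: ensuring every length bound is phrased as a polynomial in $|\kappa(x)|$ rather than $|x|$ so that $s^+$ can legitimately take $\kappa(x)$ as input, and confirming that composing the polynomial-time reduction $r$ after the compilable $g$ preserves poly-compilability with the \emph{same} compilation function $c$ (it does, since only the polynomial-time stage $f$ is modified). Once these are lined up, $(g^+, s^+)$ is the desired poly-comp reduction, and transitivity (Theorem~\ref{thm:polycomp-reducibility-transitive}) together with closure (Theorem~\ref{thm:polycomp-compatible}) is not even needed — the reduction is built in one shot.
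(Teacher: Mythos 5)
Your proposal is correct and follows essentially the same route as the paper: membership and the starting point for hardness both come from Proposition~\ref{prop:chopped-characterization}, and the hardness argument composes the resulting reduction to $(Q',\len)$ with the many-one polynomial-time reduction from $Q'$ to $Q^+$. The only difference is presentational: the paper observes that the many-one reduction yields a poly-comp reduction from $(Q',\len)$ to $(Q^+,\len)$ and then invokes Theorem~\ref{thm:polycomp-reducibility-transitive}, whereas you inline that composition (correctly handling the bookkeeping that the length bound on $g^+(x)$ must be a polynomial in $|\kappa(x)|$ so that $s^+$ may take the parameter as its argument).
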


\begin{proof}
The problem $(Q^+, \len)$ is in $\chopped{\fancyc}$
by Proposition~\ref{prop:chopped-characterization}.
Let $(Q, \kappa)$ be an arbitrary problem in $\chopped{\fancyc}$.
By Proposition~\ref{prop:chopped-characterization},
there exists a language $Q' \in \fancyc$
such that
$(Q, \kappa)$ poly-comp reduces to $(Q', \len)$.
By hypothesis, there exists a many-one polynomial-time
reduction from $Q'$ to $Q^+$;
it is straightforward to verify that this implies
that $(Q', \len)$ poly-comp reduces to $(Q^+, \len)$.
The result then follows from 
Theorem~\ref{thm:polycomp-reducibility-transitive}.
\end{proof}

\section{Chopped classes and advice}
\label{sect:chopped-classes}
\newcommand{\poly}{\mathrm{poly}}

In this section, we relate the classes $\chopped{\fancyc}$
to advice-based complexity classes; this will allow us
to provide evidence of separation between classes of the
form $\chopped{\fancyc}$.

We first present a known notion from computational complexity theory,
the notion of an advice version of a complexity class.
For each classical complexity class $\fancyc$,
we define $\fancyc/\poly$
to be the set that contains a language $Q$
if and only if there exists 
a poly-length map $a: \{ 1 \}^* \to \Sigma^*$
and a language $Q' \in \fancyc$
such that, for each $x \in \Sigma^*$,
it holds that $x \in Q \Leftrightarrow (a(\un(|x|),x)) \in Q'$.

The following theorem shows that containment
of one chopped class in another implies a
containment in classical complexity.

\begin{theorem}
\label{thm:comparing-choppedc}
Let $\fancyc$ and $\fancyc'$ be classical complexity classes
where $\fancyc'$ is closed under many-one polynomial-time reduction.
If $\chopped{\fancyc} \subseteq \chopped{\fancyc'}$,
then
$\fancyc \subseteq \fancyc'/\poly$.
\end{theorem}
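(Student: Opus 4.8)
The plan is to apply the hypothesis to the parameterized problem $(Q,\len)$, for an arbitrary $Q \in \fancyc$, and to exploit the fact that under the length parameterization the parameter of a string $x$ is just $\un(|x|)$, so that a compilation applied to this parameter is exactly an advice string depending only on $|x|$.

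First I would observe that $(Q,\len) \in \chopped{\fancyc}$: the identity map is a polynomial kernelization of $(Q,\len)$, since $|x| = |\len(x)|$, so this follows from Proposition~\ref{prop:polynomial-kernelization} together with the assumption $Q \in \fancyc$ (alternatively, it is immediate from Proposition~\ref{prop:chopped-characterization} via the identity reduction). The hypothesis $\chopped{\fancyc} \subseteq \chopped{\fancyc'}$ then yields $(Q,\len) \in \chopped{\fancyc'}$, and unpacking the definitions gives $f \in \fptime$, a computable poly-length function $c \colon \Sigma^* \to \Sigma^*$, and a language $Q'' \in \fancyc'$ such that, setting $g(x) = f(c(\len(x)),x) = f(c(\un(|x|)),x)$, we have $x \in Q \Leftrightarrow g(x) \in Q''$ for every $x \in \Sigma^*$.

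Next I would define the advice map $a \colon \{1\}^* \to \Sigma^*$ by $a(\un(n)) = c(\un(n))$; since $c$ is poly-length and $|\un(n)| = n$, the map $a$ is poly-length. I would then set $Q' = \{ (w,x) \in \Sigma^* \times \Sigma^* ~|~ f(w,x) \in Q'' \}$. The map $(w,x) \mapsto f(w,x)$ is a many-one polynomial-time reduction from $Q'$ to $Q''$, so by closure of $\fancyc'$ under such reductions, $Q' \in \fancyc'$. Finally I would verify, for every $x \in \Sigma^*$, the chain
$$(a(\un(|x|)),x) \in Q' \Leftrightarrow f(c(\un(|x|)),x) \in Q'' \Leftrightarrow g(x) \in Q'' \Leftrightarrow x \in Q,$$
which witnesses $Q \in \fancyc'/\poly$.

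I do not expect any real obstacle once one settles on using the length parameterization; the remaining steps are bookkeeping. The only points deserving a moment's care are that the length bound on $a$ is polynomial in the length $n$ of its input $1^n$, and that $Q'$ stays in $\fancyc'$ --- which is precisely why the closure assumption on $\fancyc'$ is needed. I note in passing that the chop-magnitude restriction built into $\chopped{\fancyc'}$ plays no role in this argument: membership of $(Q,\len)$ in $\polycomp{\fancyc'}$ already suffices.
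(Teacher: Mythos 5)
Your proposal is correct and follows essentially the same route as the paper: the paper also specializes the hypothesis to $(Q,\len)$ for $Q\in\fancyc$ (via Proposition~\ref{prop:chopped-characterization}) and then converts the compilation $c$ of the length parameter into advice, defining exactly your language $Q'$ and using closure under many-one polynomial-time reduction; the only cosmetic difference is that the paper isolates this second step as Lemma~\ref{lemma:chopped-to-advice}. Your closing remark that only membership in $\polycomp{\fancyc'}$ is used is accurate and consistent with the paper's argument.
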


To prove this theorem, we first establish a lemma.

\begin{lemma}
\label{lemma:chopped-to-advice}
Let $\fancyc'$ be a classical complexity class
that is closed under many-one polynomial-time reduction.
If $Q$ is a language such that
$(Q, \len) \in \chopped{\fancyc'}$,
then $Q \in \fancyc'/\poly$.
\end{lemma}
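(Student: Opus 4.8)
The plan is to unwind the definition of $(Q,\len)\in\chopped{\fancyc'}$ and repackage the compilation function as an advice function. Suppose $(Q,\len)$ is witnessed to be in $\chopped{\fancyc'}$ by a poly-compilable function $g(x)=f(c(\len(x)),x)$, a language $Q'\in\fancyc'$, and a polynomial $p$ with $x\in Q\Leftrightarrow g(x)\in Q'$ and $|g(x)|\le p(|\len(x)|)$. Since $\len(x)=\un(|x|)$, the value $c(\len(x))$ depends only on the length of $x$; this is precisely the behavior one wants from an advice string. So I would define the advice map $a:\{1\}^*\to\Sigma^*$ by $a(\un(n))=c(\un(n))$. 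Because $c$ is $\fancyl$-length for $\fancyl$ the polynomials and $|\un(n)|=n$, the map $a$ is poly-length in $n$, hence poly-length as required by the definition of $\fancyc'/\poly$.

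Next I would build the target language. Define $Q''=\{(d,x)\mid f(d,x)\in Q'\}$. Since $f\in\fptime$, the language $Q''$ is many-one polynomial-time reducible to $Q'$ via the map $(d,x)\mapsto f(d,x)$, and because $\fancyc'$ is closed under many-one polynomial-time reduction, we get $Q''\in\fancyc'$. Then for every $x\in\Sigma^*$ we have $x\in Q\Leftrightarrow g(x)\in Q'\Leftrightarrow f(c(\len(x)),x)\in Q'\Leftrightarrow (c(\un(|x|)),x)\in Q''\Leftrightarrow (a(\un(|x|)),x)\in Q''$, which is exactly the statement that $Q\in\fancyc'/\poly$, completing the argument.

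I expect the argument to be essentially routine bookkeeping; there is no deep obstacle. The only point that needs a small amount of care is the poly-length requirement on the advice: one must note that the polynomial bound on $|c(\cdot)|$ is in terms of the length of its input, that $c$ here is only ever applied to strings of the form $\un(n)=1^n$, and that $|\un(n)|=n$, so the composition $n\mapsto|c(\un(n))|$ is genuinely polynomially bounded in $n$. A secondary subtlety is that the $\chopped{}$ definition also supplies the bound $|g(x)|\le p(|\len(x)|)$; this bound plays no role in the reduction to advice (advice strings can be arbitrary poly-length strings), so I would simply remark that it is not needed here. Note also that $a$ need not be computable even though $c$ is; the definition of $\fancyc'/\poly$ imposes no computability requirement on the advice, so passing from a computable $c$ to the advice map $a$ only loosens hypotheses.
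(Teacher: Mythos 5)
Your proof is correct and follows essentially the same route as the paper's: define the advice $a(1^n)=c(1^n)$, pull $f$ into the target language via $P=\{(d,x)\mid f(d,x)\in Q'\}$ (your $Q''$), and invoke closure under many-one polynomial-time reduction. Your additional remarks---that the bound $|g(x)|\le p(|\len(x)|)$ from the chopped definition is not actually used, and that the advice need not be computable---are accurate observations the paper leaves implicit.
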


\begin{proof}
By hypothesis, there exists a language $Q' \in \fancyc'$
and a function $g: \Sigma^* \to \Sigma^*$
that is poly-compilable with respect to $\len$
such that $x \in Q \Leftrightarrow g(x) \in Q'$.
Let us denote $g(x) = f(c(\len(x)),x)$.
Let $P$ be the set of pairs
$\{ (d,x) \in \Sigma^* \times \Sigma^* ~|~ f(d,x) \in Q' \}$.
Then $P$ many-one polynomial-time reduces to $Q'$ via $f$,
so by hypothesis $P \in \fancyc$.
Define $a: \{1\}^* \to \Sigma^*$
by $a(1^n) = c(1^n)$.
We have $Q \in \fancyc'/\poly$ via $a$ and $Q'$,
as
$x \in Q$
iff
$f(a(\un(|x|)),x) = f(c(\len(x)),x) \in Q'$
iff
$(a(\un(|x|)),x) \in P$.
\end{proof}

\begin{proof}
(Theorem~\ref{thm:comparing-choppedc})
Suppose that $Q \in \fancyc$.
By Proposition~\ref{prop:chopped-characterization},
it holds that $(Q, \len)$ is in $\chopped{\fancyc}$.
By hypothesis,
it holds that $(Q, \len)$ is in $\chopped{\fancyc'}$.
By Lemma~\ref{lemma:chopped-to-advice},
it follows that $Q \in \fancyc'/\poly$.
\end{proof}

\newcommand{\Sigmap}{\Sigma^{\mathsf{p}}}
\newcommand{\Pip}{\Pi^{\mathsf{p}}}

We use $\Sigmap_i$ and $\Pip_i$ (with $i \geq 0$)
to denote the classes of the polynomial hierarchy (PH);
recall that $\Sigmap_0 = \Pip_0 = \ptime$,
$\Sigmap_1 = \np$, and $\Pip_1 = \conp$.
For each $i \geq 0$, let us say that the classes $\Sigma_i$
and $\Pi_i$ are at the \emph{$i$th level} of the PH.
Let us say that a class $\fancyc'$ of the PH
is \emph{above} another class $\fancyc$ of the PH
if 
they are equal 
or if
the level $j$ of $\fancyc'$ is strictly greater
than the level $i$ of $\fancyc$.

\begin{theorem}
\label{thm:properness-chopped-classes}
(follows from~\cite{Yap83-non-uniform})
Suppose that $\fancyc$ and $\fancyc'$ are classes of the PH
such that $\fancyc'$ is not above $\fancyc$.
\begin{itemize}

\item If $\chopped{\fancyc} \subseteq \chopped{\fancyc'}$,
then the PH collapses.

\item A parameterized problem $(Q, \kappa)$
that is $\chopped{\fancyc}$-hard
is not in $\chopped{\fancyc'}$,
unless the PH collapses.

\end{itemize}
\end{theorem}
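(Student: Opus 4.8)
The plan is to reduce Theorem~\ref{thm:properness-chopped-classes} to Theorem~\ref{thm:comparing-choppedc} together with the classical non-uniform collapse result of Yap~\cite{Yap83-non-uniform}. Recall that Yap's theorem states: if $\np \subseteq \conp/\poly$ (equivalently $\conp \subseteq \np/\poly$), then the PH collapses; more generally, for classes $\fancyc, \fancyc'$ of the PH, if a class at level $i$ is contained in the $\poly$-advice version of a class that is \emph{not above} it, then the PH collapses to a finite level. I would first carefully state the precise instance of Yap's result I need: if $\fancyc'$ is a class of the PH that is not above $\fancyc$ (another class of the PH), and $\fancyc \subseteq \fancyc'/\poly$, then the PH collapses. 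I expect this to be the one genuinely external input, and the main subtlety is making sure the ``not above'' hypothesis lines up exactly with the form under which Yap's theorem is usually stated (it is typically phrased for $\Sigmap_i$ and $\Pip_i$ with a level comparison, which is precisely the ``not above'' notion defined just before the theorem statement).

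For the first bullet, I would argue as follows. Suppose $\chopped{\fancyc} \subseteq \chopped{\fancyc'}$. Since every class of the PH is closed under many-one polynomial-time reduction, in particular $\fancyc'$ is, so Theorem~\ref{thm:comparing-choppedc} applies and yields $\fancyc \subseteq \fancyc'/\poly$. By the instance of Yap's theorem quoted above (using that $\fancyc'$ is not above $\fancyc$), the PH collapses. This is essentially a two-line deduction once the ingredients are in place.

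For the second bullet, suppose toward a contradiction that $(Q, \kappa)$ is $\chopped{\fancyc}$-hard under poly-comp reducibility and also $(Q, \kappa) \in \chopped{\fancyc'}$. By the previous proposition, $\chopped{\fancyc'}$ is closed under poly-comp reduction; hence every problem that poly-comp reduces to $(Q,\kappa)$ lies in $\chopped{\fancyc'}$. Since $(Q,\kappa)$ is $\chopped{\fancyc}$-hard, every problem in $\chopped{\fancyc}$ poly-comp reduces to it, and therefore $\chopped{\fancyc} \subseteq \chopped{\fancyc'}$. Now apply the first bullet to conclude that the PH collapses. Thus, unless the PH collapses, no $\chopped{\fancyc}$-hard problem can be in $\chopped{\fancyc'}$.

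The main obstacle, as noted, is not in the framework-internal reasoning (which is a clean chain through Theorem~\ref{thm:comparing-choppedc} and the closure of $\chopped{\fancyc'}$ under poly-comp reduction) but in citing Yap's theorem at exactly the right level of generality: one must ensure that ``$\fancyc'$ not above $\fancyc$'' translates to a hypothesis of the form ``level of $\fancyc'$ $\leq$ level of $\fancyc$ (with the appropriate $\Sigma$/$\Pi$ bookkeeping)'' for which the non-uniform-to-uniform collapse is known. I would handle the boundary cases explicitly: when $\fancyc = \fancyc'$ the statement is vacuous (nothing to prove, or ``collapse'' holds trivially in the sense that the hypothesis is consistent), and the interesting cases are $\fancyc = \Sigmap_{i}$ or $\Pip_i$ with $\fancyc'$ at a level $j \leq i$, where Yap's argument gives a collapse of the PH to roughly the $(j{+}1)$-th or $(i{+}1)$-th level. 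Since the theorem statement only claims ``the PH collapses'' without specifying the level, it suffices to invoke the existence of such a collapse.
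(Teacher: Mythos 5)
Your proposal is correct and follows essentially the same route as the paper's proof: the first bullet is Theorem~\ref{thm:comparing-choppedc} (using closure of PH classes under many-one polynomial-time reduction) followed by Yap's non-uniform collapse theorem, and the second bullet reduces to the first via closure of $\chopped{\fancyc'}$ under poly-comp reduction. Your extra care about the exact form of Yap's theorem and the (hypothesis-excluded) case $\fancyc = \fancyc'$ is fine but not needed beyond what the paper does.
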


\begin{proof}
For the first claim,
it follows from
Theorem~\ref{thm:comparing-choppedc}
that
$\fancyc \subseteq \fancyc'/\poly$;
by~\cite{Yap83-non-uniform}, it follows that the PH collapses.
For the second claim, observe that 
if $(Q, \kappa)$ is $\chopped{\fancyc}$-hard,
then $(Q, \kappa) \in \chopped{\fancyc'}$
implies that $\chopped{\fancyc} \subseteq \chopped{\fancyc'}$,
by the closure of $\chopped{\fancyc'}$ under poly-comp
reduction (Theorem~\ref{thm:polycomp-compatible}).
\end{proof}

\section{Relationship to the CDLS framework}
\label{sect:cdls}

In this section, we discuss the relationship between our
framework and the CDLS framework.  We in particular
show that, in a sense that we make precise, 
the completeness results that they obtain for their
problem classes can be formulated and obtained in our framework.
Note that the proofs of 
Theorems~\ref{thm:mixed-equals-chopped},
\ref{thm:hardness-from-epsilon},
and~\ref{thm:hardness-from-poly-bd-slices}
are deferred to the appendix 
(Section~\ref{sect:proofs-cdls}).

By a \emph{language of pairs},
we refer to a subset of $\Sigma^* \times \Sigma^*$.

The CDLS framework defines, 
for each classical complexity class, 
a class which they refer to as
the \emph{class of problems non-uniformly compilable
to a class $C$}, and which contains 
languages of pairs~\cite[Definition 2.7]{CDLS02-preprocessing}.
We give the following formulation of this definition.

\begin{definition}
\label{def:mixedc}
A language $B \subseteq \Sigma^* \times \Sigma^*$
of pairs is in $\mixed{\fancyc}$
if there exists a poly-length, computable
function $f: \Sigma^* \times \Sigma^* \to \Sigma^*$
and a language $B' \in \fancyc$ of pairs
such that 
$(x,y) \in B \Leftrightarrow
(f(x,\un(|y|)),y) \in B'$.
\end{definition}

Note that our definition is not exactly equivalent to theirs;
we require that the function $f$ is computable, while 
they do not.  We do not know of any natural language of pairs
for which this makes a difference; assuming computability of $f$
will allow us to more readily relate the defined classes to
those of our framework.

To illustrate how classification results on languages
obtained in the CDLS framework can be obtained in our framework,
we discuss three running examples (studied in~\cite{CDLS02-preprocessing}):

\begin{itemize}

\item Define $\CI$ \emph{(clause inference)}
to be the set of pairs $(x,y)$ 
where $x$ is a propositional 3CNF formula,
$y$ is a clause, and $x \models y$.
We assume here that clauses do not contain repeated literals.

\item Define $\MMC$ \emph{(minimal model checking)}
to be the set of pairs $(x,y)$
where $x$ is a propositional formula
and $y$ is a minimal model of $x$.
By \emph{minimal},
we mean with respect to the order $\leq$
where $z \leq z'$ if and only if all variables
true under $z$ are also true under $z'$.

\item Define $\CMI$ \emph{(clause minimal inference)}
to be the set of pairs $(x,y)$
where $x$ is a propositional formula
and $y$ is a clause that is satisfied
by all minimal models of $x$.

\end{itemize}
It is known and straightforward to verify that
$\CI, \MMC \in \conp$ and $\CMI \in \Pip_2$.
It follows immediately that
$\CI, \MMC \in \mixed{\conp}$
and $\CMI \in \mixed{\Pip_2}$.

Let us say that a parameterized problem
$(Q, \kappa)$ has \emph{poly-bounded slices} if
there exists a polynomial $p$ such that,
for each $x \in Q$,
it holds that $|x| \leq p(|\kappa(x)|)$.
Each of the three parameterized problems 
$(\CI,\pi_1)$, $(\MMC,\pi_1)$, and $(\CMI,\pi_1)$
have poly-bounded slices (as is readily verified),
and it can consequently be verified that
$(\CI,\pi_1), (\MMC,\pi_1) \in \chopped{\conp}$
and that $(\CMI,\pi_1) \in \chopped{\Pip_2}$.
It is indeed a general fact that 
when $B$ is a language of pairs 
where $(B, \pi_1)$ has poly-bounded slices,
the classes $\mixed{\fancyc}$
and $\chopped{\fancyc}$ coincide,
as made precise by the following theorem.

\begin{theorem}
\label{thm:mixed-equals-chopped}
Let $\fancyc$ be a classical complexity class
closed under many-one polynomial-time reduction.
Let $B$ be a language of pairs
such that
$(B, \pi_1)$ has poly-bounded slices.
Then, $B$ is in $\mixed{\fancyc}$
if and only if $(B, \pi_1)$ is in $\chopped{\fancyc}$.
\end{theorem}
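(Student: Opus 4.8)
The plan is to prove the equivalence by two direct constructions that translate between the two flavors of compilation: a $\mixed{\fancyc}$-compilation has access to $x$ and to $\un(|y|)$, whereas the compilation underlying $\chopped{\fancyc}$ for the parameterization $\pi_1$ has access only to $\pi_1(x,y)=x$. The whole difficulty lies in bridging this interface mismatch.

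I would start with the backward direction, $(B,\pi_1)\in\chopped{\fancyc}\Rightarrow B\in\mixed{\fancyc}$, which in fact does not need the poly-bounded slices hypothesis. Fix a witnessing poly-compilation $g(x,y)=f(c(x),(x,y))$ with $f\in\fptime$ and $c$ computable and poly-length, together with $Q'\in\fancyc$ such that $(x,y)\in B\Leftrightarrow g(x,y)\in Q'$. Take $F(x,\un(n))=(c(x),x)$ as the $\mixed{\fancyc}$-compilation: it is poly-length because $c$ is, and it is computable. Take $B'=\{((d,x'),y)\mid f(d,(x',y))\in Q'\}$; then $B'$ many-one polynomial-time reduces to $Q'$ via $((d,x'),y)\mapsto f(d,(x',y))$, so $B'\in\fancyc$ by closure, and $(x,y)\in B\Leftrightarrow((c(x),x),y)\in B'\Leftrightarrow(F(x,\un(|y|)),y)\in B'$, as required.

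For the forward direction, $B\in\mixed{\fancyc}\Rightarrow(B,\pi_1)\in\chopped{\fancyc}$, fix a poly-length computable $f$ and $B'\in\fancyc$ with $(x,y)\in B\Leftrightarrow(f(x,\un(|y|)),y)\in B'$, and fix a polynomial $p_0$ with $(x,y)\in B\Rightarrow|y|\le p_0(|x|)$. The key observation is that, once the parameter $x$ is known, only the finitely many length-arguments $\un(0),\dots,\un(p_0(|x|))$ can ever be relevant, so the compilation may tabulate them: set $c(x)=(f(x,\un(0)),f(x,\un(1)),\dots,f(x,\un(p_0(|x|))))$, which is poly-length in $|x|$ (poly-many entries, each of length polynomial in $|x|$) and computable. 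Define $g(x,y)$ from $(c(x),(x,y))$ by: if $|y|\le p_0(|x|)$, read $f(x,\un(|y|))$ off the table and output $(f(x,\un(|y|)),y)$; otherwise output a fixed string $w\notin B'$ (one exists since $B'$ is non-trivial, as $B$ is). This is polynomial-time computable, so $g$ is poly-compilable with respect to $\pi_1$; with $Q'=B'$ we get $(x,y)\in B\Leftrightarrow g(x,y)\in Q'$, where poly-bounded slices is used exactly to handle the case $|y|>p_0(|x|)$ (there $(x,y)\notin B$ and $g(x,y)=w\notin B'$). Finally $|g(x,y)|$ is bounded by a polynomial in $|x|=|\pi_1(x,y)|$: it is $|w|$ in the out-of-range case, and otherwise at most $|f(x,\un(|y|))|+|y|+O(1)$ with $|y|\le p_0(|x|)$ and $|f(x,\un(|y|))|$ polynomial in $|x|+|y|$. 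Hence $(B,\pi_1)\in\chopped{\fancyc}$.

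I expect the main obstacle to be precisely the forward direction's interface mismatch: a $\mixed{\fancyc}$-compilation may legitimately consult $\un(|y|)$, which a parameter-only compilation never sees. The resolution — and the one place the hypothesis really enters — is that this dependence only ranges over lengths polynomially bounded in the parameter, so it can be precompiled into a table; the remaining steps are routine, consisting of absorbing polynomial-time reductions into $B'$ or $Q'$ and appealing to closure of $\fancyc$ under many-one polynomial-time reduction.
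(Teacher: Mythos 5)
Your proof is correct and follows essentially the same route as the paper's: for the forward direction you tabulate $f(x,\un(i))$ for all $i$ up to the slice-bounding polynomial and default to a fixed non-member outside that range, and for the backward direction you pair $c(x)$ with $x$ and absorb $f$ into the target language via closure under many-one polynomial-time reduction. The observation that the backward direction does not need the poly-bounded-slices hypothesis is also consistent with the paper's argument.
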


We now present a formulation of the notion 
of reduction used in the CDLS framework
(see~\cite[Definition 2.8]{CDLS02-preprocessing}).

\begin{definition}
Let $A$ and $B$ be languages of pairs.
A \emph{mixed reduction} from $A$ to $B$
is a triple $(f_1, f_2, g)$ 
of mappings from $\Sigma^* \times \Sigma^*$ to $\Sigma$
where $f_1$ and $f_2$ are poly-length and computable,
and $g$ is polynomial-time computable,
such that
$(x,y) \in A \Leftrightarrow
(f_1(x,\un(|y|)),g(f_2(x,\un(|y|)),y)) \in B$.
\end{definition}

In analogy to Definition~\ref{def:mixedc},
here we require that the functions $f_1$ and $f_2$ are
computable.

As a way of showing hardness, 
CDLS present mixed-reductions from
languages of the form $\{ \epsilon \} \times Q^+$
where $Q^+$ is a classical language that is hard.
For example, they present the following reductions.

\begin{theorem}
\label{thm:mixed-red-to-ci}
(follows from~\cite[Proof of Theorem 2.10]{CDLS02-preprocessing})
There exists a $\conp$-complete problem $Q^+$
such that there
exists a mixed-reduction from $\{\epsilon\} \times Q^+$
to $\CI$.
\end{theorem}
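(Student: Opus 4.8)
The plan is to exhibit an explicit $\conp$-complete language $Q^+$ together with an explicit mixed-reduction $(f_1,f_2,g)$ from $\{\epsilon\}\times Q^+$ to $\CI$. The key constraint to keep in mind is that every pair in $\{\epsilon\}\times Q^+$ has empty first coordinate, so $f_1(\epsilon,\un(|w|))$ may depend only on $|w|$; thus I need a $3$CNF formula that is determined by the instance \emph{length} and that, after imposing a partial assignment read off from the instance, simulates an arbitrary $\conp$ question. I would take $Q^+$ to be a length-normalized encoding of $3$CNF-unsatisfiability. For each $n$, fix the enumeration $D_1,\dots,D_{K(n)}$, with $K(n)=(2n)^3$, of all $3$-clauses with distinct literals over variables $x_1,\dots,x_n$. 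For a bit string $w$, set $n(|w|)=\max\{j : (2j)^3\le |w|\}$ and let $\psi_w$ be the conjunction of those $D_i$ with $i\le K(n(|w|))$ whose $i$-th bit of $w$ equals $1$; define $Q^+=\{w : \psi_w \text{ is unsatisfiable}\}$. Since $w\mapsto\psi_w$ is polynomial-time computable and $\psi_w$ has only $n(|w|)=O(|w|^{1/3})$ variables, a satisfying assignment is a polynomial-size witness of non-membership, so $Q^+\in\conp$; and a standard padding argument (replacing clauses of fewer than three distinct literals by conjunctions of exactly-three-distinct-literal clauses over fresh variables) shows that $3$CNF-unsatisfiability many-one polynomial-time reduces to $Q^+$, so $Q^+$ is $\conp$-complete.

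For the template formula, given $m$ with $n=n(m)$ and $K=K(n)$, I let $\Phi_n$ be the $3$CNF formula over $\{x_1,\dots,x_n\}\cup\{s_1,\dots,s_K\}\cup\{t_1,\dots,t_K\}$ that contains, for each $i\le K$, the two clauses $(\neg s_i\vee\lambda_{i,1}\vee t_i)$ and $(\neg t_i\vee\lambda_{i,2}\vee\lambda_{i,3})$, where $D_i=(\lambda_{i,1}\vee\lambda_{i,2}\vee\lambda_{i,3})$; the variable $t_i$ is the usual device for splitting the $4$-clause $(\neg s_i\vee\lambda_{i,1}\vee\lambda_{i,2}\vee\lambda_{i,3})$, equivalently $s_i\to D_i$, into two $3$-clauses. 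A routine check shows that $\Phi_n$ augmented with the partial assignment $s_i\mapsto w_i$ (for $i\le K$) is satisfiable if and only if $\psi_w$ is satisfiable. Writing $C_w$ for the clause $\bigvee_{i\le K}\mu_i$ with $\mu_i=\neg s_i$ when $w_i=1$ and $\mu_i=s_i$ when $w_i=0$, the negation $\neg C_w$ is precisely that partial assignment, so $\Phi_n\models C_w$ iff $\Phi_n\wedge\neg C_w$ is unsatisfiable iff $\psi_w$ is unsatisfiable iff $w\in Q^+$. Note $C_w$ has no repeated literals and is not tautological; $\Phi_n$ is a genuine distinct-literal $3$CNF formula, is polynomial-length in $m$, and is computable from $m$ alone; and $C_w$ is polynomial-time computable from $w$, its relevant length $K(n(|w|))$ being recoverable from $|w|$.

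It then remains to package this so that the triple is correct on \emph{all} inputs $(x,y)$, not only those with $x=\epsilon$. I would define $f_1(x,\un(m))$ to output $\Phi_{n(m)}$ when $x=\epsilon$ and $n(m)\ge 1$, and a fixed satisfiable formula such as $(x_1\vee x_2\vee x_3)$ otherwise; $f_2(x,\un(m))$ to output the bit $[x=\epsilon]$ together with $\un(n(m))$; and $g$, on input $(f_2(x,\un(|w|)),w)$, to output $C_w$ when that bit is $1$ and $n(|w|)\ge 1$, and a fixed non-entailed clause such as $(x_4)$ otherwise. Then $f_1,f_2$ are poly-length and computable, $g$ is polynomial-time computable, and for every $(x,y)$ one has $(x,y)\in\{\epsilon\}\times Q^+$ iff $x=\epsilon$ and $y\in Q^+$ iff $\big(f_1(x,\un(|y|)),\,g(f_2(x,\un(|y|)),y)\big)\in\CI$, since in the off-case the constructed pair consists of a satisfiable formula and a clause it does not entail.

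I expect the only genuine work to be bookkeeping: pinning down the length-indexed encoding so that $Q^+$ is simultaneously $\conp$-complete and such that the template $\Phi_n$ depends only on the instance length, and verifying the equisatisfiability of $\Phi_n$ under the partial assignment $s_i\mapsto w_i$. The $3$CNF-ification via the $t_i$ variables and the treatment of out-of-range inputs are mechanical. (As the theorem statement indicates, one may alternatively derive the result by inspecting the construction in the proof of \cite[Theorem 2.10]{CDLS02-preprocessing} and transcribing it into the present notion of mixed reduction.)
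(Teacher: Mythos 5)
The paper gives no proof of its own here---it defers entirely to the construction in the cited proof of CDLS's Theorem 2.10---and your argument is a correct, self-contained transcription of exactly that construction into the present notion of mixed reduction: a length-determined template formula with selector variables $s_i$ guarding an enumeration of all candidate 3-clauses, the instance being injected only through the query clause $C_w$, whose negation fixes the selectors. All the pieces check out (the equisatisfiability of $\Phi_n$ under $s_i\mapsto w_i$ with $\psi_w$, the $\conp$-completeness of the length-normalized language $Q^+$, and the off-case handling for $x\neq\epsilon$), the only quibble being the harmless bookkeeping fact that $(2n)^3$ overcounts the distinct-literal 3-clauses over $n$ variables.
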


\begin{theorem}
\label{thm:mixed-red-to-cmi}
(follows from~\cite[Proof of Theorem 3.2]{CDLS02-preprocessing})
There exists a $\Pip_2$-complete problem $Q^+$
such that there
exists a mixed-reduction from $\{\epsilon\} \times Q^+$
to $\CMI$.
\end{theorem}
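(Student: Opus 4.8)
The plan is to obtain the mixed-reduction by putting the standard $\Pip_2$-hardness argument for $\CMI$ (i.e.\ for skeptical minimal inference) into the ``compilation-friendly'' shape already used for $\CI$ in the proof of Theorem~\ref{thm:mixed-red-to-ci}; this is precisely what the argument of \cite[Proof of Theorem 3.2]{CDLS02-preprocessing} accomplishes, and the only discrepancy with our definitions will be that we additionally require $f_1,f_2$ to be computable, which their construction plainly satisfies. Concretely, I would take for $Q^+$ a fixed $\Pip_2$-complete quantified Boolean formula problem, say the set of valid sentences $\forall X\,\exists Y\,\psi(X,Y)$ with $\psi$ a 3CNF formula, and I would start from the known polynomial-time reduction underlying the $\Pip_2$-completeness of skeptical minimal inference, which sends such a sentence $\Psi$ to a pair $(\phi_\Psi,\gamma_\Psi)$ with $\Psi$ valid $\Leftrightarrow$ every minimal model of $\phi_\Psi$ satisfies the clause $\gamma_\Psi$ (one may take $\gamma_\Psi$ to be a single fixed literal), that is, $\Psi$ valid $\Leftrightarrow (\phi_\Psi,\gamma_\Psi)\in\CMI$.

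The key observation is that in such a reduction $\phi_\Psi$ splits as $G_n\wedge M_\Psi$, where $G_n$ is a ``gadget'' depending only on the number $n$ of available variables (the part enforcing, through minimization, a guessed assignment to $X$ together with the witnessing machinery) and $M_\Psi$ is essentially the matrix $\psi$, while $\gamma_\Psi$ depends only on $n$. To remove the instance-dependence of $M_\Psi$, I would replace it by a \emph{universal} gadget: for each of the polynomially many potential $3$-clauses $C_j$ over the variable pool, introduce a selector pair $e_j,\bar e_j$ and add the conjunct $(e_j\vee\bar e_j)\wedge(\bar e_j\to C_j)$, so that a model ``activates'' $C_j$ exactly by setting $\bar e_j$ true. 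Call the resulting fixed formula $\Phi_n$ (note $\CMI$ allows an arbitrary propositional formula in the first component, so no CNF-ization is needed here). Given an instance $\Psi$ encoded by a string $y$ with $|y|\le n$, let $\sigma$ be the partial assignment to the selector variables activating exactly the clauses of the matrix of $\Psi$, and let $\gamma_\sigma$ be the clause asserting that $\sigma$ is \emph{not} realized. I would then take the mixed-reduction $(f_1,f_2,g)$ with $f_1(\epsilon,\un(n))=\Phi_n$, with $f_2(\epsilon,\un(n))$ the fixed description of the selector indexing, and with $g$ the polynomial-time map sending $(f_2(\epsilon,\un(n)),y)$ to the clause $\gamma_\sigma\vee\gamma_\Psi$; the checks that $f_1,f_2$ are computable and poly-length and that $g$ is polynomial-time are then routine.

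What remains is correctness: $y\in Q^+ \Leftrightarrow (\Phi_n,\gamma_\sigma\vee\gamma_\Psi)\in\CMI$. Since every model of $\Phi_n$ not realizing $\sigma$ already satisfies $\gamma_\sigma$, this reduces to the statement that every minimal model of $\Phi_n$ realizing $\sigma$ satisfies $\gamma_\Psi$. I expect the main obstacle to be verifying that minimality interacts correctly with the selector variables; the point is that $\sigma$ sets exactly one of each pair $e_j,\bar e_j$ to true and hence forms an antichain, so if $N<M$ and $M$ realizes $\sigma$ then $N$ must realize $\sigma$ as well (pushing any selector down from $M$ would force its partner up, contradicting $N<M$). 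It follows that the minimal models of $\Phi_n$ realizing $\sigma$ are exactly the models whose non-selector part is a minimal model of $\Phi_n|_\sigma = G_n\wedge M_\Psi = \phi_\Psi$, so the condition becomes $\phi_\Psi\models_{\min}\gamma_\Psi$, which by the base reduction holds iff $\Psi$ is valid, i.e.\ iff $y\in Q^+$. Finally I would double-check the bookkeeping that lets a single length-indexed family $\Phi_n$ handle instances of varying size (using $n$ ``slots'' for the $X$- and $Y$-variables and padding as needed), and record that the construction coincides, up to the computability of $f_1,f_2$, with the reduction implicit in \cite[Proof of Theorem 3.2]{CDLS02-preprocessing}.
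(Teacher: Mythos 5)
The paper does not actually prove this theorem; it imports the construction wholesale from the cited proof of CDLS, and your reconstruction follows exactly that genericization strategy (a length-indexed universal formula with clause selectors, the instance re-encoded into the query clause, correctness via the antichain property of the selector pairs), so it matches the intended argument and is essentially correct. One caution: read literally, your gadget $(\bar e_j \to C_j)$ turns the activated matrix clauses into hard constraints of $\Phi_n|_\sigma$, which would break the base reduction, since an $X$-assignment with no witnessing $Y$ would then admit no models at all, contribute no minimal model, and so invalidity of $\forall X\,\exists Y\,\psi$ could go undetected. In the standard $\Pi^{\mathsf{p}}_2$-hardness reduction the matrix appears softened by a failure variable (top-level conjuncts of the form $C_j \vee w$, together with $w \to y$ for all $y \in Y$ and query clause $\neg w$), and it is these weakened clauses that the selectors must guard; with $M_\Psi$ read as that instance-dependent block of weakened clauses, your decomposition $\Phi_n|_\sigma = G_n \wedge M_\Psi$ and the minimality analysis go through as written.
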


We now present a general theorem showing that
exhibiting a reduction from 
a language of the form $\{ \epsilon \} \times Q^+$
yields a hardness result with respect to the classes
$\chopped{\fancyc}$, made precise as follows.

\begin{theorem}
\label{thm:hardness-from-epsilon}
Suppose that $A$ and $B$ are languages of pairs
such that there exists a mixed reduction from $A$ to $B$,
and let $\fancyc$ be a classical complexity class.
If $A = \{ \epsilon \} \times Q^+$
where $Q^+$ is $\fancyc$-complete,
then $(B, \pi_1)$ is $\chopped{\fancyc}$-hard.
\end{theorem}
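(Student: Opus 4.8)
The plan is to show that an arbitrary problem $(Q, \kappa) \in \chopped{\fancyc}$ poly-comp reduces to $(B, \pi_1)$, by composing three reductions. First, by Proposition~\ref{prop:chopped-characterization}, there is a language $Q' \in \fancyc$ with $(Q, \kappa)$ poly-comp reducing to $(Q', \len)$. Since $Q^+$ is $\fancyc$-complete under many-one polynomial-time reduction, $Q'$ reduces to $Q^+$ via some $r \in \fptime$, and one checks (as in the proof of Proposition~\ref{prop:complete-for-choppedc}) that this lifts to a poly-comp reduction from $(Q', \len)$ to $(Q^+, \len)$. The real content is then to exhibit a poly-comp reduction from $(Q^+, \len)$ to $(B, \pi_1)$, using the mixed reduction $(f_1, f_2, g)$ from $A = \{\epsilon\} \times Q^+$ to $B$; transitivity of poly-comp reducibility (Theorem~\ref{thm:polycomp-reducibility-transitive}) and closure of $\chopped{\fancyc}$ then finish the argument.

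For the key step, I would unpack the mixed reduction specialized to the first coordinate being $\epsilon$: for every string $w$ (viewed as an element of $Q^+$ or not) and every $y$, we have $w \in Q^+ \Leftrightarrow (\epsilon, w) \in A \Leftrightarrow (f_1(\epsilon, \un(|w|)), g(f_2(\epsilon, \un(|w|)), w)) \in B$. Wait—here the roles need care: in the mixed reduction the pair $(x,y)$ is $(\epsilon, w)$, so $y = w$, and the right-hand side is $(f_1(\epsilon, \un(|w|)), g(f_2(\epsilon, \un(|w|)), w)) \in B$. Define the compiled part to be $c(\un(n)) = (f_1(\epsilon, \un(n)), f_2(\epsilon, \un(n)))$, which is computable and poly-length since $f_1, f_2$ are; and define the efficiently computable part $f$ so that $f((d_1, d_2), w) = (d_1, g(d_2, w))$, which is polynomial-time computable since $g$ is. Then $G(w) := f(c(\len(w)), w)$ is poly-compilable with respect to $\len$, and $w \in Q^+ \Leftrightarrow G(w) \in B$.

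It remains to handle the parameter-tracking side condition: a poly-comp reduction to $(B, \pi_1)$ must also supply a poly-length computable $s$ with $\pi_1(G(w)) = f_1(\epsilon, \un(|w|)) \in s(\len(w))$. Since $f_1$ is poly-length and depends (via $\epsilon$ fixed) only on $\un(|w|)$, the singleton-valued map $s(\un(n)) = \{ f_1(\epsilon, \un(n)) \}$ is computable and poly-length, and it evidently contains $\pi_1(G(w))$. Thus $(G, s)$ is the desired poly-comp reduction from $(Q^+, \len)$ to $(B, \pi_1)$.

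Finally I would assemble the pieces: $(Q, \kappa)$ poly-comp reduces to $(Q', \len)$, which poly-comp reduces to $(Q^+, \len)$, which poly-comp reduces to $(B, \pi_1)$; by Theorem~\ref{thm:polycomp-reducibility-transitive} these compose, so $(Q, \kappa)$ poly-comp reduces to $(B, \pi_1)$. As $(Q, \kappa)$ was arbitrary in $\chopped{\fancyc}$, the problem $(B, \pi_1)$ is $\chopped{\fancyc}$-hard. The main obstacle I anticipate is purely bookkeeping: being careful about which coordinate of the mixed-reduction pair is fixed to $\epsilon$ and which is the genuine input $w$, and correspondingly making sure $c$ is a function of the length parameter alone (so that it is legitimately a compilation of $\len$) rather than of $w$ itself — the fact that $f_1, f_2$ see $w$ only through $\un(|w|)$ is exactly what makes this work. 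No deep idea is needed beyond correctly matching the CDLS mixed-reduction shape to the $f(c(\kappa(x)), x)$ shape of poly-comp reductions.
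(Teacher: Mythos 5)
Your proof is correct and follows essentially the same route as the paper: the paper also reduces $(Q^+,\len)$ to $(B,\pi_1)$ and concludes via the $\chopped{\fancyc}$-completeness of $(Q^+,\len)$ (Proposition~\ref{prop:complete-for-choppedc}) and transitivity. The only difference is presentational---the paper factors the key step through the intermediate problem $(A,\mu)$ using the map $x \mapsto (\epsilon,x)$ together with Lemma~\ref{lemma:mixed-to-polycomp}, whereas you inline that composition by specializing the mixed reduction at first coordinate $\epsilon$, obtaining exactly the same $c$, $f$, and $s$.
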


\begin{corollary}
\label{cor:chopped-hardness-ci-cmi}
The problem $(\CI, \pi_1)$ is $\chopped{\conp}$-hard;
the problem $(\CMI, \pi_1)$ is $\chopped{\Pip_2}$-hard.
\end{corollary}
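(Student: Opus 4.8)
The plan is to obtain this corollary as an immediate instantiation of Theorem~\ref{thm:hardness-from-epsilon}, feeding it the mixed reductions supplied by Theorems~\ref{thm:mixed-red-to-ci} and~\ref{thm:mixed-red-to-cmi}. The shape of the argument is the same in both cases: locate a hard classical language $Q^+$ for which a mixed reduction from $\{\epsilon\} \times Q^+$ into the target language of pairs is already known, verify that $Q^+$ is complete for exactly the class indexing the desired chopped class, and then quote Theorem~\ref{thm:hardness-from-epsilon}.

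Concretely, for the first claim I would invoke Theorem~\ref{thm:mixed-red-to-ci} to obtain a $\conp$-complete problem $Q^+$ together with a mixed reduction from $\{\epsilon\} \times Q^+$ to $\CI$. Applying Theorem~\ref{thm:hardness-from-epsilon} with $A = \{\epsilon\} \times Q^+$, $B = \CI$, and $\fancyc = \conp$ then yields that $(\CI, \pi_1)$ is $\chopped{\conp}$-hard. For the second claim the argument is entirely parallel: Theorem~\ref{thm:mixed-red-to-cmi} provides a $\Pip_2$-complete problem $Q^+$ and a mixed reduction from $\{\epsilon\} \times Q^+$ to $\CMI$, and applying Theorem~\ref{thm:hardness-from-epsilon} with $B = \CMI$ and $\fancyc = \Pip_2$ gives that $(\CMI, \pi_1)$ is $\chopped{\Pip_2}$-hard.

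I do not expect any genuine obstacle at this point; the substantive content has been pushed into Theorem~\ref{thm:hardness-from-epsilon} (proved in the appendix) and into the two CDLS-derived mixed-reduction constructions. The only thing worth double-checking is that the hypotheses align cleanly, i.e.\ that in each case the hard language witnessing the mixed reduction is complete for precisely the class appearing in the conclusion ($\conp$ for $\CI$, $\Pip_2$ for $\CMI$), which is exactly what Theorems~\ref{thm:mixed-red-to-ci} and~\ref{thm:mixed-red-to-cmi} assert.
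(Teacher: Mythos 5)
Your proposal is correct and follows exactly the paper's own argument: the paper also derives this corollary by combining Theorems~\ref{thm:mixed-red-to-ci} and~\ref{thm:mixed-red-to-cmi} with Theorem~\ref{thm:hardness-from-epsilon}. Your version merely spells out the instantiation of the hypotheses more explicitly.
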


\begin{proof}
Follows from Theorems~\ref{thm:mixed-red-to-ci},
\ref{thm:mixed-red-to-cmi}
and~\ref{thm:hardness-from-epsilon}.
\end{proof}

The other way in which CDLS show hardness is by 
presenting a mixed-reduction from a problem that 
has poly-bounded slices.  For example, they prove the following.

\begin{theorem}
\label{thm:mixed-reduction-from-ci-to-mmc}
(follows from~\cite[Proof of Theorem 3.1]{CDLS02-preprocessing})
There exists a mixed-reduction from $\CI$ to $\MMC$.
\end{theorem}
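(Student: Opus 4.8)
The plan is to exhibit the mixed reduction $(f_1,f_2,g)$ from $\CI$ to $\MMC$ explicitly. Given $(\alpha,\gamma)$ with $\alpha$ a $3$CNF over variables $p_1,\dots,p_n$ and $\gamma$ a clause, I want to output a pair $(\psi,M^\ast)$ such that $M^\ast$ is a minimal model of $\psi$ exactly when $\alpha\models\gamma$. Two standard facts drive the construction. First, $\alpha\models\gamma$ holds if and only if $\alpha\wedge\overline\gamma$ is unsatisfiable, where $\overline\gamma$ denotes the conjunction of the complements of the literals of $\gamma$. Second, for any formula $\Psi$ over $p_1,\dots,p_n$ and a fresh variable $z$, the assignment setting all of $p_1,\dots,p_n,z$ to true is a minimal model of $\Psi\vee(z\wedge p_1\wedge\cdots\wedge p_n)$ precisely when $\Psi$ is unsatisfiable: any assignment strictly below the all-true one falsifies the second disjunct and hence must satisfy $\Psi$.

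The difficulty peculiar to the mixed-reduction setting is that the formula component is $f_1(\alpha,\un(|\gamma|))$, which does not see $\gamma$; so $\overline\gamma$ cannot be hard-coded into $\psi$, and $\gamma$ must instead be smuggled through the model $M^\ast=g(f_2(\cdot),\gamma)$, which does see $\gamma$. The device I would use is a family of indicator variables through which $M^\ast$ encodes $\gamma$ but which $M^\ast$ cannot cheaply switch off. For each $i$ introduce $a_i,a_i',b_i,b_i'$, where $a_i$ is meant to mark that $\neg p_i$ occurs in $\gamma$ and $b_i$ that $p_i$ occurs in $\gamma$. Let $f_1(\alpha,\un(|\gamma|))$ output
\[
\psi \;:=\; \Big(\bigwedge_{i=1}^{n}(a_i\vee a_i')\;\wedge\;\bigwedge_{i=1}^{n}(b_i\vee b_i')\Big)\;\wedge\;\bigg[\Big(\alpha\;\wedge\;\bigwedge_{i=1}^{n}(a_i\to p_i)\;\wedge\;\bigwedge_{i=1}^{n}(b_i\to\neg p_i)\Big)\;\vee\;\Big(z\wedge\bigwedge_{i=1}^{n}p_i\Big)\bigg]
\]
(reading $\alpha$ as a formula over $p_1,\dots,p_n$), let $f_2(\alpha,\un(|\gamma|)):=\alpha$, and let $g(\alpha,\gamma)$ output the assignment $M^\ast$ that sets every $p_i$ and $z$ to true, sets $a_i$ (resp.\ $b_i$) to true iff $\neg p_i$ (resp.\ $p_i$) occurs in $\gamma$, and sets $a_i':=\neg a_i$ and $b_i':=\neg b_i$. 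All three maps are polynomial time, and $f_1$ indeed depends only on $\alpha$ and $\un(|\gamma|)$ while $g$ needs $\gamma$ together with the variable names of $\alpha$, which $f_2$ supplies.

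The verification I would carry out has two parts. First, $M^\ast$ is always a model of $\psi$: it satisfies each clause $a_i\vee a_i'$ and $b_i\vee b_i'$, and it satisfies $z\wedge p_1\wedge\cdots\wedge p_n$. Second, $M^\ast$ is minimal if and only if $\alpha\wedge\overline\gamma$ is unsatisfiable. Let $M$ be strictly below $M^\ast$. If $M$ switches off an indicator variable that is true in $M^\ast$ then $M$ falsifies the corresponding clause $a_i\vee a_i'$ or $b_i\vee b_i'$ (the partner variable is already false in $M^\ast$), so $M$ is not a model; hence $M$ agrees with $M^\ast$ on all indicator variables, and under these values the conjunction $\bigwedge_i(a_i\to p_i)\wedge\bigwedge_i(b_i\to\neg p_i)$ asserts exactly that the restriction of $M$ to $p_1,\dots,p_n$ falsifies every literal of $\gamma$, i.e.\ it is $\overline\gamma$. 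Since $M$ is strictly below $M^\ast$ it falsifies $z\wedge p_1\wedge\cdots\wedge p_n$, so $M$ is a model iff its restriction to $p_1,\dots,p_n$ satisfies $\alpha\wedge\overline\gamma$; conversely every such restriction yields an assignment strictly below $M^\ast$. Hence $M^\ast$ admits a strictly smaller model iff $\alpha\wedge\overline\gamma$ is satisfiable, which is the desired equivalence.

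I expect the main work, and the step most prone to error, to be precisely this minimality analysis: the indicator variables must lie in the true set of $M^\ast$ so that $g$ can encode $\gamma$ through them, which makes $M^\ast$ large, and it is only the clauses $a_i\vee a_i'$ and $b_i\vee b_i'$ that prevent those variables from being deleted for free and thereby align minimality of $M^\ast$ with unsatisfiability of $\alpha\wedge\overline\gamma$. The remaining bookkeeping concerns degenerate $\gamma$: if $\gamma$ mentions a variable outside $\{p_1,\dots,p_n\}$ but is not a tautology, $g$ simply ignores those literals when setting the indicators (this does not change whether $\alpha\models\gamma$); and if $\gamma$ is a tautology, $g$ instead sets the indicators to encode a fixed two-literal tautology such as $p_1\vee\neg p_1$, which makes $\overline\gamma$, and hence $\alpha\wedge\overline\gamma$, unsatisfiable and forces $M^\ast$ to be minimal, so the output is correctly a yes-instance.
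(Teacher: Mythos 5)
Your construction is correct: the indicator variables pinned by the binary clauses $a_i\vee a_i'$ and $b_i\vee b_i'$ let the model component carry $\gamma$ while the formula component depends only on $\alpha$, and the disjunct $z\wedge\bigwedge_i p_i$ correctly reduces minimality of $M^\ast$ to unsatisfiability of $\alpha\wedge\overline{\gamma}$, with the tautology and out-of-vocabulary cases handled adequately. The paper gives no proof of its own here but defers to the proof of Theorem 3.1 of CDLS, which rests on essentially this same device, so your argument is a faithful self-contained reconstruction of the cited reduction.
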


We show that this form of reduction can be interpreted
as a poly-comp reduction, made precise as follows.

\begin{theorem}
\label{thm:hardness-from-poly-bd-slices}
Suppose that $A$ and $B$ are languages of pairs
such that there exists a mixed reduction from $A$ to $B$.
If $(A, \pi_1)$ has poly-bounded slices,
then $(A, \pi_1)$ poly-comp reduces to $(B, \pi_1)$.
\end{theorem}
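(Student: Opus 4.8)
The plan is to use the given mixed reduction $(f_1, f_2, g)$ directly as the skeleton of the poly-comp reduction, with the poly-bounded slices hypothesis supplying exactly what is needed to tame the $\un(|y|)$-dependence of $f_1$ and $f_2$. Let $p$ be a polynomial witnessing that $(A, \pi_1)$ has poly-bounded slices, so that $(x,y) \in A$ implies $|y| \leq |(x,y)| \leq p(|x|)$. Since $B$ is nontrivial by our standing assumption on languages, fix a pair $w_N \notin B$. I would define the reduction map $G: \Sigma^* \to \Sigma^*$ to send $(x,y)$ to $(f_1(x,\un(|y|)), g(f_2(x,\un(|y|)),y))$ when $|y| \leq p(|x|)$, and to $w_N$ otherwise. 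The equivalence $(x,y) \in A \Leftrightarrow G(x,y) \in B$ is then immediate: in the first case it is exactly the defining property of the mixed reduction, and in the second case both sides are false, since $|y| > p(|x|)$ forces $(x,y) \notin A$ by the slice bound, while $G(x,y) = w_N \notin B$.

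The key step is to verify that $G$ is poly-compilable with respect to $\pi_1$. Here the parameter of $(x,y)$ is $\pi_1(x,y) = x$, and the difficulty is that $f_1$ and $f_2$ depend not only on $x$ but also on $\un(|y|)$, whereas compilation may only be applied to $x$. The poly-bounded slices hypothesis resolves this: when $|y| > p(|x|)$ the instance is handled by the constant $w_N$, so only the values $f_1(x,\un(n))$ and $f_2(x,\un(n))$ for $0 \leq n \leq p(|x|)$ are ever needed. I would therefore let the compilation function $c$ map $x$ to the tuple listing the pairs $(f_1(x,\un(n)), f_2(x,\un(n)))$ over all $n \in \{0, \ldots, p(|x|)\}$. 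This $c$ is computable (as $f_1, f_2$ are), and it is poly-length: each entry has length polynomial in $|(x,\un(n))| \leq |x| + p(|x|) + O(1)$, hence polynomial in $|x|$, and there are only $p(|x|)+1$ entries. The accompanying function $f \in \fptime$ receives $(c(x),(x,y))$, computes $n = |y|$, and---when $n \leq p(|x|)$---looks up the precomputed $f_1(x,\un(n))$ and $f_2(x,\un(n))$ and returns $(f_1(x,\un(n)), g(f_2(x,\un(n)),y))$, invoking the polynomial-time $g$; when $n > p(|x|)$ it returns $w_N$. Thus $G(x,y) = f(c(\pi_1(x,y)),(x,y))$, establishing poly-compilability.

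It remains to exhibit the poly-length, computable function $s: \Sigma^* \to \wpfin(\Sigma^*)$ controlling the target parameter $\pi_1(G(x,y))$. Inspecting $G$, the first coordinate of $G(x,y)$ is $f_1(x,\un(|y|))$ when $|y| \leq p(|x|)$ and $\pi_1(w_N)$ otherwise, so I would set
$$s(x) = \{ f_1(x,\un(n)) ~|~ 0 \leq n \leq p(|x|) \} \cup \{ \pi_1(w_N) \}.$$
By the same counting as for $c$, this set has a representation of length polynomial in $|x|$, and it is computable; by construction $\pi_1(G(x,y)) \in s(\pi_1(x,y))$ for every $(x,y)$. Together with the equivalence and the poly-compilability of $G$, this shows that $(G,s)$ is a poly-comp reduction from $(A,\pi_1)$ to $(B,\pi_1)$. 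I expect the main obstacle to be conceptual rather than computational: recognizing that poly-bounded slices is precisely the hypothesis bounding the relevant range of $\un(|y|)$ to polynomially many values, which is what allows the otherwise unbounded length-dependence of $f_1$ and $f_2$ to be absorbed into a poly-length compilation of the parameter $x$ alone.
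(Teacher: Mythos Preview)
Your proof is correct. The equivalence check, the poly-compilability of $G$ via the table $c(x)$ of all pairs $(f_1(x,\un(n)),f_2(x,\un(n)))$ for $n\le p(|x|)$, and the set $s(x)$ are all handled properly; in particular the poly-length bound on $c$ and on $s$ follows exactly as you argue, from the poly-length of $f_1,f_2$ together with the $p(|x|)+1$-fold tabulation.

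Your route differs from the paper's in decomposition. The paper factors the argument in two steps: it first gives a poly-comp reduction from $(A,\pi_1)$ to $(A,\mu)$, where $\mu(x,y)=(x,\un(|y|))$, using the poly-bounded slices hypothesis to bound the range of the second component of $\mu$; it then invokes a separate lemma (Lemma~\ref{lemma:mixed-to-polycomp}) asserting that any mixed reduction from $A$ to $B$ yields a poly-comp reduction from $(A,\mu)$ to $(B,\pi_1)$, and concludes by transitivity. Your argument fuses these two steps into a single direct reduction from $(A,\pi_1)$ to $(B,\pi_1)$, with the tabulation over $n\le p(|x|)$ playing the role of the composition of the two compilation functions. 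The paper's modular route has the advantage that Lemma~\ref{lemma:mixed-to-polycomp} is reused in the proof of Theorem~\ref{thm:hardness-from-epsilon}; your direct route is more self-contained and avoids appealing to transitivity.
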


\begin{corollary}
\label{cor:reduction-and-hardness-of-mmc}
There exists a poly-comp reduction from
$(\CI,\pi_1)$ to $(\MMC,\pi_1)$,
and hence
(by Corollary~\ref{cor:chopped-hardness-ci-cmi})
the problem $(\MMC,\pi_1)$ is $\chopped{\conp}$-hard.
\end{corollary}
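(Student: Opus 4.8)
The plan is to derive the corollary from Theorem~\ref{thm:hardness-from-poly-bd-slices} instantiated with $A = \CI$ and $B = \MMC$. A mixed reduction from $\CI$ to $\MMC$ is supplied by Theorem~\ref{thm:mixed-reduction-from-ci-to-mmc}, and $(\CI,\pi_1)$ has poly-bounded slices, since a clause $y$ entailed by $x$ may be taken over the variables occurring in $x$, so that $|y|$ is linear in $|x|$. The theorem then yields a poly-comp reduction from $(\CI,\pi_1)$ to $(\MMC,\pi_1)$. By transitivity of poly-comp reducibility (Theorem~\ref{thm:polycomp-reducibility-transitive}), every problem that poly-comp reduces to $(\CI,\pi_1)$ also poly-comp reduces to $(\MMC,\pi_1)$; since by Corollary~\ref{cor:chopped-hardness-ci-cmi} every problem in $\chopped{\conp}$ poly-comp reduces to $(\CI,\pi_1)$, we conclude that $(\MMC,\pi_1)$ is $\chopped{\conp}$-hard. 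Thus the real content is Theorem~\ref{thm:hardness-from-poly-bd-slices}, which I would prove as follows.

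Let $(f_1,f_2,g)$ be the given mixed reduction from $A$ to $B$, let $p$ be a polynomial with $(x,y)\in A \Rightarrow |y|\le p(|x|)$, and fix a pair $b_N\notin B$ (available by non-triviality of $B$). Define the compilation $c$ so that, on the parameter $x = \pi_1(x,y)$, it outputs the table of all pairs $(f_1(x,\un(\ell)),f_2(x,\un(\ell)))$ for $0\le\ell\le p(|x|)$; since $f_1,f_2$ are computable and poly-length and $\ell\le p(|x|)$, the function $c$ is computable and poly-length in $|x|$. Let $F\in\fptime$ be the map that, on input $(c(x),(x,y))$, computes $\ell=|y|$ and, if $\ell\le p(|x|)$, reads $d_1=f_1(x,\un(\ell))$ and $d_2=f_2(x,\un(\ell))$ off the table and outputs $(d_1,g(d_2,y))$, and otherwise outputs $b_N$. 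Setting $G(x,y)=F(c(\pi_1(x,y)),(x,y))$, the function $G$ is poly-compilable with respect to $\pi_1$.

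It remains to verify the two conditions of a poly-comp reduction from $(A,\pi_1)$ to $(B,\pi_1)$. For the equivalence: if $|y|\le p(|x|)$, then $G(x,y)$ equals $(f_1(x,\un(|y|)),g(f_2(x,\un(|y|)),y))$, so $(x,y)\in A \Leftrightarrow G(x,y)\in B$ by the defining property of the mixed reduction; if $|y|>p(|x|)$, then $(x,y)\notin A$ by the choice of $p$ and $G(x,y)=b_N\notin B$, so the equivalence holds trivially. For the parameter-tracking condition, put $s(x)=\{\pi_1(b_N)\}\cup\{f_1(x,\un(\ell)):0\le\ell\le p(|x|)\}$, which is computable and poly-length, and note that $\pi_1(G(x,y))$ always belongs to $s(\pi_1(x,y))$. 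The step I expect to be the main obstacle is the size accounting: it is exactly the poly-bounded-slices hypothesis that bounds the relevant range of $|y|$ by a polynomial in the parameter $|x|$, which is what allows the compilation to precompute $f_1$ and $f_2$ at \emph{all} admissible lengths while staying poly-length; and one must check that $g(d_2,y)$ is evaluated only when $|y|\le p(|x|)$, so that $F$ runs in polynomial time.
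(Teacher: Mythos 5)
Your derivation of the corollary is exactly the paper's: it is immediate from Theorem~\ref{thm:mixed-reduction-from-ci-to-mmc} and Theorem~\ref{thm:hardness-from-poly-bd-slices}, with the hardness transfer then following from Corollary~\ref{cor:chopped-hardness-ci-cmi} and transitivity, and your appeal to the poly-bounded-slices property of $(\CI,\pi_1)$ matches what the paper asserts in Section~\ref{sect:cdls}. The only divergence is in your inlined re-proof of Theorem~\ref{thm:hardness-from-poly-bd-slices}: the paper factors that argument into a poly-comp reduction from $(A,\pi_1)$ to $(A,\mu)$ composed with Lemma~\ref{lemma:mixed-to-polycomp} (which handles the table of $(f_1,f_2)$ values at the single relevant length), whereas you build the table over all lengths $\ell \le p(|x|)$ in one direct compilation; your version is a correct unfolding of that composition and the size accounting you flag is handled the same way in both, so nothing of substance differs.
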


\begin{proof}
Immediate from Theorems~\ref{thm:mixed-reduction-from-ci-to-mmc}
and~\ref{thm:hardness-from-poly-bd-slices}.
\end{proof}

At this point, we can observe that the non-compilability
results that CDLS obtain 
can be obtained in our framework.
For example, consider the following.
As we have seen
(and as stated in
Corollaries~\ref{cor:chopped-hardness-ci-cmi}
and~\ref{cor:reduction-and-hardness-of-mmc}),
the problems $(\CI,\pi_1)$ and $(\MMC,\pi_1)$
are $\chopped{\conp}$-hard.
This implies that these two problems
are not in $\chopped{\ptime}$, unless the PH collapses,
via Theorem~\ref{thm:properness-chopped-classes}.
We can also obtain the 
non-compilability results in (essentially) the
form stated by CDLS:
by invoking Theorem~\ref{thm:mixed-equals-chopped},
it immediately follows that the problems $\CI$ and $\MMC$
are not in $\mixed{\ptime}$, unless the PH collapses.
We want to emphasize here that the 
hardness proofs can be carried out 
using the notions and concepts of our framework.

\section{Completeness and hardness for $\chopped{\np}$}
\label{sect:completeness-hardness-chopped-np}

In this section, we present completeness and hardness results
for the class $\chopped{\np}$.

Define $\hampath$ to be the problem of deciding, given
an undirected graph $G$, whether or not $G$
contains a Hamiltonian path; define the parameterization
$\gamma$ so that $\gamma(G)$ is equal to the number of nodes in $G$.
The problems $\threesat$ and $\circuitsat$ are defined as usual.
In the context of $\threesat$, $\nu$ is the parameterization
that returns, given a formula $\phi$, the number of variables
that appear in $\phi$.
In the context of $\circuitsat$, $\mu + \nu$
is the parameterization that returns, given a circuit $C$,
the sum of 
the number of non-input gates
and 
the number of input gates 
of $C$.
For each $d \geq 2$,
we consider $\dhittingset$ to be the problem
where an instance is a pair $(H, k)$ consisting of a number $k \geq 1$
and
a hypergraph $H$ where each edge has size less than or equal to $d$,
and one is to decide whether or not $H$ has a hitting set
of size less than or equal to $k$.
Note that here, all numbers are represented in unary.

\begin{theorem}
\label{thm:chopped-np-completeness}
The following problems are $\chopped{\np}$-complete:
\begin{enumerate}

\item $(\hampath, \gamma)$
\item $(\threesat, \nu)$
\item $(\circuitsat, \mu + \nu)$
\item $(\dhittingset, \pi_2)$, for each $d \geq 2$

\end{enumerate}
\end{theorem}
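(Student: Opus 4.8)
The plan is to prove membership in $\chopped{\np}$ and $\chopped{\np}$-hardness separately for each problem, leaning heavily on the machinery already assembled in the excerpt. For membership, I would use Proposition~\ref{prop:polynomial-kernelization}: each of these problems has an underlying language in $\np$ (Hamiltonian path, 3-SAT, circuit satisfiability, $d$-hitting set are all in $\np$), so it suffices to exhibit a polynomial kernelization with respect to the stated parameterization. For $(\threesat, \nu)$ this is immediate, since a 3CNF formula on $n$ variables has at most $O(n^3)$ distinct clauses, so deleting repeated clauses yields an equivalent instance of size polynomial in $\nu(\phi)$; for $(\hampath, \gamma)$ an instance on $n$ nodes already has size $O(n^2)$, so the identity map is a polynomial kernelization with respect to $\gamma$; for $(\circuitsat, \mu+\nu)$ the circuit size is already bounded by a polynomial in the total number of gates, so again the identity works; for $(\dhittingset, \pi_2)$, when the target size $k$ is given in unary, the standard sunflower-lemma kernelization (Flum–Grohe) reduces the instance to one with $O(k^d)$ edges, and since numbers are unary this is polynomial in $|\pi_2(H,k)| = |\un(k)|$. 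In each case I would remark that the stated parameterization dominates (up to a polynomial) the relevant quantity, so Proposition~\ref{prop:polynomial-kernelization} applies and gives membership in $\chopped{\np}$.

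For hardness, the cleanest route is to first establish $\chopped{\np}$-hardness of $(\circuitsat, \mu+\nu)$ directly, then propagate it to the other three via poly-comp reductions (using transitivity, Theorem~\ref{thm:polycomp-reducibility-transitive}, and closure, Theorem~\ref{thm:polycomp-compatible}). For $(\circuitsat, \mu+\nu)$: by Proposition~\ref{prop:complete-for-choppedc}, $(\circuitsat', \len)$ is $\chopped{\np}$-complete, where $\circuitsat'$ is any $\np$-complete language — in particular I would take $\circuitsat$ itself, viewed simply as a language. So it is enough to give a poly-comp reduction from $(\circuitsat, \len)$ to $(\circuitsat, \mu+\nu)$. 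The map $g$ can be the identity on circuits, and one checks that $\mu(C)+\nu(C)$, the total gate count, is bounded by a polynomial in $|C| = |\len^{-1}\text{-relevant quantity}|$, hence by a polynomial in $|\len(C)|$; the set-valued function $s$ then sends $\un(|C|)$ to the (polynomially many) unary strings of length at most that polynomial bound, so that $\mu(C)+\nu(C) \in s(\len(C))$. This verifies Definition~\ref{def:poly-comp-reduction}, so $(\circuitsat, \mu+\nu)$ is $\chopped{\np}$-hard.

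For the remaining three, I would exhibit poly-comp reductions from $(\circuitsat, \mu+\nu)$: the usual Cook–Levin–style reduction from $\circuitsat$ to $\threesat$ produces, from a circuit with $g$ gates, a 3CNF formula with $O(g)$ variables (one per wire plus the clause-gadget variables), so $\nu$ of the output is polynomially bounded in $\mu+\nu$ of the input; dually, $\threesat$ reduces back to $\circuitsat$ with a linear blow-up in gate count, so these two problems are poly-comp interreducible, and one of them suffices as a source for the others. The standard reduction from $\threesat$ to $\hampath$ builds a graph whose node count is polynomial in the number of variables and clauses — and the number of clauses is itself polynomial in $\nu$ once repeated clauses are removed — so $\gamma$ of the output is polynomially bounded in $\nu$ of the input. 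Finally, for $(\dhittingset, \pi_2)$ I would reduce from $(\threesat,\nu)$ using the well-known reduction from $3$-SAT (equivalently vertex cover / $3$-hitting set): an $n$-variable $3$CNF with $m$ clauses maps to a $3$-hitting-set instance with $O(m)$ edges and target $k = O(m)$, and since $m = O(n^3)$ after deduplication, $|\pi_2| = |\un(k)| = O(n^3)$ is polynomial in $\nu$. In each reduction the set-valued map $s$ is again the obvious one: from a bound $B(|\text{param}|)$ on the output parameter's length it lists all unary strings of length $\le B$. The main obstacle I anticipate is the bookkeeping for $(\dhittingset, \pi_2)$: one must be careful that the target $k$ is genuinely polynomially bounded in the source parameter (which forces the deduplication preprocessing on the $\threesat$ side so that $m$, and hence $k$, is controlled by $\nu$), and that the unary encoding convention is respected throughout so that $|\pi_2(H,k)|$ really is dominated by a polynomial in the source parameter length; the other reductions are routine verifications that textbook many-one reductions happen to keep the relevant parameter polynomially bounded, which is exactly the extra condition that upgrades them to poly-comp reductions.
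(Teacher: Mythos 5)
Your proposal is correct, and it reaches the same conclusions by a genuinely different decomposition than the paper. The paper handles $(\hampath,\gamma)$, $(\threesat,\nu)$, and $(\circuitsat,\mu+\nu)$ in one shot each, by showing each problem poly-comp \emph{interreducible} with its own $\len$-parameterized version (the identity map in both directions, plus the observation that instance size and parameter bound each other polynomially, with clause deduplication inserted for $\threesat$) and then invoking Proposition~\ref{prop:complete-for-choppedc}; only for the hitting-set problems does it split membership (the sunflower kernelization plus Proposition~\ref{prop:polynomial-kernelization}) from hardness (the $\threesat$-to-vertex-cover reduction, tracking $k$). You instead prove membership uniformly via Proposition~\ref{prop:polynomial-kernelization}, noting that for the first three problems the identity (resp.\ deduplication) map is already a polynomial kernel, and prove hardness by a chain $(\circuitsat,\len) \to (\circuitsat,\mu+\nu) \to (\threesat,\nu) \to (\hampath,\gamma)$ and onward to hitting set. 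Your first link is exactly one direction of the paper's interreducibility argument; the remaining links require you to verify parameter-boundedness of the textbook Tseitin-style and Hamiltonian-path reductions, which the paper sidesteps by going through $\len$ separately for each problem (exploiting that each underlying language is itself $\np$-complete, so Proposition~\ref{prop:complete-for-choppedc} applies to each directly). What your route buys is a cleaner conceptual separation (membership equals kernelization, hardness equals parameter-bounded classical reductions); what the paper's route buys is fewer reductions to check. Both arguments correctly identify the two delicate points: the deduplication step needed so that the number of clauses, and hence all downstream parameters such as the target $k$ of the hitting-set instance, is controlled by $\nu$; and the unary conventions that make $|\pi_2(H,k)|$ comparable to $k$.
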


\begin{proof}
We first discuss the $\chopped{\np}$-completeness of
$(\hampath, \gamma)$.
We prove that $(\hampath, \gamma)$ and $(\hampath, \len)$
are poly-comp interreducible, which suffices by the
$\np$-completeness of $\hampath$ and 
Proposition~\ref{prop:complete-for-choppedc}.
We argue that $(\hampath, \gamma)$ poly-comp reduces
to $(\hampath, \len)$ 
as follows.   
There exists a polynomial $p$
such that the size of an instance $G$ with $n$ vertices
can be bounded by $p(n)$.
We may thus take $g$ to be the identity map
and define $s(\un(n)) = \{ \un(0), \un(1), \ldots, \un(p(n)) \}$.
We argue that $(\hampath, \len)$ poly-comp reduces
to $(\hampath, \gamma)$ as follows.
Under a standard representation of graphs, 
we have that (for each graph $G$) $\gamma(G) \leq |G|$.
We may thus take $g$ to be the identity map
and define $s(\ell) = \{ \un(0), \un(1), \ldots, \un(\ell) \}$.

A similar argument can be used to show that
$(\circuitsat, \mu + \nu)$ and $(\circuitsat, \len)$
are poly-comp interreducible.
A similar argument can be used to show that
$(\threesat, \nu)$ and $(\threesat, \len)$
are poly-comp interreducible, but note the following.
For the reduction from $(\threesat, \nu)$
to $(\threesat, \len)$, to
obtain a bound of the form $p(n)$ (where $p$ is a polynomial)
on the size of an instance with $n$ variables,
instead of defining $g$ to be the identity map,
we may define $g$ to be the polynomial-time computable
function that reduces duplicate clauses,
so that there is indeed a bound of the claimed form,
since the number of clauses will then be $O(n^3)$.

The $\chopped{\np}$-hardness of the problems
$(\dhittingset, \pi_2)$
can be observed as follows.  
It suffices to prove hardness in the case of $d=2$,
as for higher values of $d$ the problem is more general.
To prove hardness in the case of $d=2$,
we take $g$ to be the
reduction given by the proof of 
\cite[Theorem 7.44]{Sipser12-introduction}
which is a many-one polynomial-time reduction
from $\threesat$ to the vertex cover problem;
since the value of $k$ in the created instance $(G,k)$
is a polynomial function of the number of variables 
of the original $\threesat$ instance, this $g$
can be completed to be a poly-comp reduction
from $(\threesat, \nu)$.
The containment of $(\dhittingset, \pi_2)$
in $\chopped{\np}$ can be obtained from 
the known polynomial kernelization 
of this problem~\cite[Section 9.1]{FlumGrohe06-parameterizedcomplexity}
and
Proposition~\ref{prop:polynomial-kernelization}.
\end{proof}

As a way of witnessing the utility of the presented framework,
let us discuss how one of the non-compilability results
from a previous paper~\cite{Chen14-existentialpositive} on the parameterized complexity
of model checking can be formulated within this framework.
Here, by a \emph{unary signature}, we mean a signature
containing only unary relation symbols.
Define $\uepmc$ to be the problem of deciding,
given a pair $(\phi, \relb)$ consisting of an existential
positive sentence and a finite relational structure,
each over the same unary signature,
whether or not $\phi$ evaluates to true on $\relb$
(see the paper~\cite{Chen14-existentialpositive} for
definitions and background).

\begin{prop}
\label{prop:upemc}
The parameterized problem $(\uepmc, \pi_1)$
is $\chopped{\np}$-hard.
\end{prop}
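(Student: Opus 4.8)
The plan is to exhibit a poly-comp reduction from a $\chopped{\np}$-complete problem to $(\uepmc, \pi_1)$. The natural source is $(\threesat, \nu)$, which is $\chopped{\np}$-complete by Theorem~\ref{thm:chopped-np-completeness}. Recall from the definition of poly-comp reduction that it suffices to give a poly-compilable function $g$ (with respect to $\nu$) together with a poly-length, computable set-valued function $s$ such that $\phi \in \threesat \Leftrightarrow g(\phi) \in \uepmc$ and $\pi_1(g(\phi)) \in s(\nu(\phi))$; the second condition says that the first coordinate of $g(\phi)$ — namely the existential positive sentence — must come from a polynomially sized list determined by the parameter $\nu(\phi)$ alone.

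First I would recall the encoding of $3$-SAT instances into unary-signature existential positive model checking used in~\cite{Chen14-existentialpositive}: given a $3$-CNF formula $\phi$ on variables $v_1, \ldots, v_n$, the sentence $\varphi_n$ is built using existential quantifiers for the $n$ variables together with unary predicates that encode the allowed truth assignments, and the structure $\relb_\phi$ encodes the clauses of $\phi$; the key point is that $\varphi_n$ \emph{depends only on $n$}, while the clause information is pushed entirely into the structure. Thus I set $g(\phi) = (\varphi_n, \relb_\phi)$ with $n = \nu(\phi)$. The compilation function $c$ on input $\un(n)$ simply outputs $\varphi_n$ (which has size polynomial in $n$), and $f$ pairs this compiled sentence with the polynomial-time-computable structure $\relb_\phi$ read off from $\phi$; this shows $g$ is poly-compilable with respect to $\nu$. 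For $s$, I define $s(\un(n)) = \{ \varphi_n \}$, a singleton whose one element has length polynomial in $n$, so $s$ is poly-length and computable, and trivially $\pi_1(g(\phi)) = \varphi_n \in s(\nu(\phi))$. Correctness of the equivalence $\phi \in \threesat \Leftrightarrow g(\phi) \in \uepmc$ is exactly the correctness of the model-checking encoding from~\cite{Chen14-existentialpositive}.

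The main obstacle is verifying that the reduction of~\cite{Chen14-existentialpositive} really does have the feature that the sentence depends only on the number of variables (and not on the clauses), so that it can legitimately be folded into the compilation function $c$ applied to $\nu(\phi)$ rather than to $\phi$ itself; if the published construction instead makes the sentence depend on the clause structure, I would need to rework it, keeping all clause-dependent data in the structure and using a generic sentence (quantifying over variable-slots and asserting, via the unary predicates supplied by the structure, that each clause is satisfied). Once that separation is in place, the rest — checking poly-length of $c$ and $s$, checking $f \in \fptime$, and checking the biconditional — is routine, and the result follows; note that by Theorem~\ref{thm:properness-chopped-classes} this hardness also gives evidence (modulo non-collapse of the PH) that $(\uepmc, \pi_1) \notin \polycomp{\ptime}$, which is the point of the statement.
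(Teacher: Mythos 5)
Your overall strategy---reduce from $(\threesat,\nu)$, which is $\chopped{\np}$-complete by Theorem~\ref{thm:chopped-np-completeness}, using the model-checking reduction of~\cite{Chen14-existentialpositive}---is exactly the paper's. But the step you correctly flag as ``the main obstacle'' is where the argument breaks: the reduction in~\cite{Chen14-existentialpositive} maps an instance with $n$ variables and $m$ clauses to a pair $(S_n^m,\relb)$ whose sentence depends on $m$ as well as $n$, not to a sentence $\varphi_n$ determined by $n$ alone. So your singleton $s(\un(n)) = \{\varphi_n\}$ does not exist as described, and your fallback---reworking the construction so that all clause-dependent data lives in the structure---is both unnecessary and dubious for a unary signature, where the structure has limited capacity to encode which literals occur together in a clause.

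The fix the paper uses is simpler and exploits the very flexibility of $s$ that you state in your first paragraph but then abandon: $s(\nu(\phi))$ need only be a polynomially sized \emph{set} containing $\pi_1(g(\phi))$. Precompose the reduction with duplicate-clause elimination, so that a $3$-CNF on $n$ variables has at most $C(n) = O(n^3)$ clauses; then the produced sentence lies in $\{S_n^0, S_n^1, \ldots, S_n^{C(n)}\}$, and taking $s(\un(n))$ to be exactly this set (poly-length and computable) completes the poly-comp reduction. This is the same device used in the proof of Theorem~\ref{thm:chopped-np-completeness} for $(\threesat,\nu)$ itself. With that replacement for your $s$, the rest of your argument (correctness of the biconditional, poly-compilability of $g$) goes through.
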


\begin{proof}
Let $h$ be the reduction given in \cite{Chen14-existentialpositive},
which is a many-one polynomial-time reduction
from the CNF satisfiability problem
to $\uepmc$ where an instance having $n$ variables
and $m$ clauses is mapped to an instance of the form
$(S^m_n, \relb)$, where each $S^m_n$ is a sentence.
Let $g$ be the map that,
given a 3-SAT formula $\phi$,
 eliminates duplicate clauses from $\phi$
 and then maps the result under $h$.
 For a 3-SAT formula $\phi$ with $n$ variables,
 it will thus hold that 
there exists a polynomial $C \in O(n^3)$
such that
 $\pi_1(g(\phi)) \in \{ S_n^0, S_n^1, \ldots, S_n^{C(n)} \}$.
If we define $s(n) = \{ S_n^0, S_n^1, \ldots, S_n^{C(n)} \}$,
we thus have that $(g,s)$ is a poly-comp reduction
from $(\threesat, \nu)$ to $(\uepmc, \pi_1)$,
which yields the result by 
Theorem~\ref{thm:chopped-np-completeness}.
\end{proof}

\paragraph{\bf Acknowledgements.}
This work was supported by the Spanish project
TIN2013-46181-C2-2-R, 
by the Basque project GIU12/26,
and by the Basque grant UFI11/45.



\bibliography{../hubiebib}



\newpage
\appendix

\section{Proofs for Section~\ref{subsect:reduction}}
\label{sect:reduction-proofs}

We establish a lemma that will be of aid.

\begin{lemma}
\label{lemma:compose-poly-comp-functions}
Suppose that 
$(Q, \kappa)$ and $(Q', \kappa')$
are parameterized problems;
that $(g,s)$ is a poly-comp reduction
from  $(Q, \kappa)$ to $(Q', \kappa')$;
and,
that $g'$ is poly-compilable with respect to $\kappa'$
and $Q''$ is a language
with $x' \in Q' \Leftrightarrow g'(x') \in Q''$.
Then, there exists a function $g^+: \Sigma^* \to \Sigma^*$ 
that is poly-compilable with respect to $\kappa$
such that $x \in Q \Leftrightarrow g^+(x) \in Q''$.

If one assumes in addition 
that $\kappa''$ is a parameterization
and that $s': \Sigma^* \to \wpfin(\Sigma^*)$ is a function
such that $(g',s')$ is a poly-comp reduction
from $(Q', \kappa')$ to $(Q'', \kappa'')$,
then there exists $s^+: \Sigma^* \to \wpfin(\Sigma^*)$
such that $(g^+, s^+)$ is a poly-comp reduction
from $(Q, \kappa)$ to $(Q'', \kappa'')$.
\end{lemma}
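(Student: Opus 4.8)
The plan is to take the composed map $g^+(x) = g'(g(x))$ as the new reduction function and to show it is poly-compilable with respect to $\kappa$ by merging the compilation data of $g$ and $g'$. Write $g(x) = f(c(\kappa(x)),x)$ with $f \in \fptime$ and $c$ computable and poly-length, and $g'(x') = f'(c'(\kappa'(x')),x')$ with $f' \in \fptime$ and $c'$ computable and poly-length. The point requiring care is that $g^+(x) = f'(c'(\kappa'(g(x))),g(x))$ refers to $c'$ evaluated at $\kappa'(g(x))$, which is \emph{not} a function of $\kappa(x)$ alone; but the reduction hypothesis gives $\kappa'(g(x)) \in s(\kappa(x))$, and since $s$ is poly-length, $s(\kappa(x))$ is a set of polynomially many strings (in $|\kappa(x)|$), each of polynomial length. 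This is exactly what lets us precompute $c'$ on all the relevant parameters at compile time.

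Concretely, I would define $c^+(k) = (c(k), T_k)$, where $T_k$ is the lookup table $\{ (w, c'(w)) ~|~ w \in s(k) \}$. Since $c$, $c'$, and $s$ are computable, so is $c^+$; and since $c$ and $c'$ are poly-length, $s$ is poly-length, and the strings of $s(k)$ have length polynomial in $|k|$, the output $c^+(k)$ has length polynomial in $|k|$. I would then let $f^+$ be the polynomial-time algorithm that, on input $((c(\kappa(x)), T_{\kappa(x)}), x)$, first computes $y = f(c(\kappa(x)),x) = g(x)$, then computes $\kappa'(y)$ in polynomial time (using $\kappa' \in \fptime$), looks up the entry for $\kappa'(y)$ in $T_{\kappa(x)}$ to recover $c'(\kappa'(y))$, and finally outputs $f'(c'(\kappa'(y)),y) = g'(y)$; the lookup succeeds precisely because $\kappa'(g(x)) \in s(\kappa(x))$. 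Thus $g^+(x) := f^+(c^+(\kappa(x)),x) = g'(g(x))$ is poly-compilable with respect to $\kappa$, and $x \in Q \Leftrightarrow g(x) \in Q' \Leftrightarrow g'(g(x)) \in Q''$ gives $x \in Q \Leftrightarrow g^+(x) \in Q''$, as required.

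For the additional clause, suppose $\kappa''$ and $s'$ are given with $(g',s')$ a poly-comp reduction from $(Q',\kappa')$ to $(Q'',\kappa'')$, so that $\kappa''(g'(x')) \in s'(\kappa'(x'))$ and $s'$ is poly-length. I would set $s^+(k) = \bigcup_{w \in s(k)} s'(w)$. Then $\kappa''(g^+(x)) = \kappa''(g'(g(x))) \in s'(\kappa'(g(x)))$, and since $\kappa'(g(x)) \in s(\kappa(x))$ this lies in $s^+(\kappa(x))$. The function $s^+$ is computable because $s$ and $s'$ are, and it is poly-length: $s(k)$ contributes polynomially many strings of length polynomial in $|k|$, and each such $w$ contributes via $s'(w)$ polynomially many strings of length polynomial in $|w|$, hence polynomial in $|k|$; so $s^+(k)$ consists of polynomially many strings each of polynomial length. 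Hence $(g^+, s^+)$ is a poly-comp reduction from $(Q,\kappa)$ to $(Q'',\kappa'')$.

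The main (and essentially only nontrivial) obstacle is the observation noted above: one cannot simply precompute $c'$ at the ``correct'' parameter $\kappa'(g(x))$, since that value depends on $x$ rather than on $\kappa(x)$; the resolution is to use the poly-length guarantee on $s$ to precompute the entire table of values of $c'$ over $s(\kappa(x))$ while keeping the compilation poly-length. The remaining work is routine bookkeeping verifying that the composed objects $c^+$, $f^+$, and $s^+$ remain computable, poly-length, and polynomial-time as the definitions demand.
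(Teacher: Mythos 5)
Your proposal is correct and matches the paper's own proof essentially step for step: the same composed map $g'(g(x))$, the same compilation $c^+(k)=(c(k),\{(w,c'(w)) : w\in s(k)\})$ with table lookup inside $f^+$, and the same $s^+(k)=\bigcup_{w\in s(k)}s'(w)$. You also correctly isolate the one nontrivial point (that $c'$ must be precomputed over all of $s(\kappa(x))$ since $\kappa'(g(x))$ depends on $x$), which is exactly the idea the paper's construction relies on.
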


\begin{proof}
Suppose that $g$ is poly-compilable (with respect to $\kappa$)
via $f: \Sigma^* \to \Sigma^*$ and
$c: \Sigma^* \to \Sigma^*$;
suppose that $g'$ is poly-compilable (with respect to $\kappa'$)
via $f': \Sigma^* \to \Sigma^*$ and
$c': \Sigma^* \to \Sigma^*$.

Define $c^+: \Sigma^* \to \Sigma^*$ by
$c^+(k) = (c(k), \{ (k', c'(k') ~|~ k' \in s(k) \})$;
it is straightforward to verify that this function
is computable and has poly-length.
Define $f^+: \Sigma^* \times \Sigma^* \to \Sigma^*$
to be the function computed the polynomial-time algorithm
that, given a pair $(a,x)$ of strings:
\begin{itemize}

\item sets $k = \kappa(x)$;
\item views $a$ as a string of the form
$(d, t)$ where $t$ is a set of pairs;
\item computes $x' = f(d,x)$;
\item computes $k' = \kappa'(x')$ (which is in $s(k)$);
and then, if there exists $d'$ such that $(k',d') \in t$,
computes and outputs $x'' = f'(c'(k'),x')$
(if not, the algorithm outputs the empty string).

\end{itemize}
Define the function $g^+$ by $g^+(x) = f^+(c^+(\kappa(x),x)$.

We have that $g^+$ is poly-compilable with respect to $\kappa$.
Let $x \in \Sigma^*$ be any string, and set $k = \kappa(x)$.
Given the pair $(c^+(k),x)$, the algorithm for $f^+$
will compute $x' = f(c(k),x)$,
will compute $k' = \kappa'(x')$,
and 
will output $x'' = f'(c'(k'),x')$.
We have that
$x \in Q$ iff $x' = g(x) \in Q'$ iff $x'' = g'(x') \in Q''$.

Suppose that $\kappa''$ and $s'$ are as described
in the lemma statement.
Define $s^+$ as $\cup_{k' \in s(k)} s'(k')$.
It is straightforward to verify that $s^+$
has the desired properties.
\end{proof}

\begin{proof}
(Theorem~\ref{thm:polycomp-compatible})
Suppose that $(Q, \kappa)$ poly-comp reduces
to $(Q', \kappa')$,
and that the problem $(Q', \kappa')$
is in $\polycomp{\fancyc}$.
By definition of $\polycomp{\fancyc}$,
there exists a function $g'$ that is poly-compilable
with respect to $\kappa'$ and a language $Q'' \in \fancyc$
such that (for each $x \in \Sigma^*$)
$x' \in Q' \Leftrightarrow g'(x') \in Q''$.
The function $g^+$
provided by Lemma~\ref{lemma:compose-poly-comp-functions}
witnesses that $(Q, \kappa) \in \polycomp{\fancyc}$.
\end{proof}

\begin{proof}
(Theorem~\ref{thm:polycomp-reducibility-transitive})
Lemma~\ref{lemma:compose-poly-comp-functions}
implies directly that if there exist
a poly-comp reduction
from $(Q, \kappa)$ to $(Q', \kappa')$
and
a poly-comp reduction
from $(Q', \kappa')$ to $(Q'', \kappa'')$,
then there exists a poly-comp reduction
from $(Q, \kappa)$ to $(Q'', \kappa'')$.
\end{proof}


\section{Proofs for Section~\ref{sect:cdls}}
\label{sect:proofs-cdls}

\subsection{Proof of Theorem~\ref{thm:mixed-equals-chopped}}
\begin{proof}
Suppose that $B$ is in $\mixed{\fancyc}$
via $f$ and $B' \in \fancyc$.
We have $(x,y) \in B \Leftrightarrow (f(x,\un(|y|)),y) \in B'$.
To show that $(B, \pi_1) \in \chopped{\fancyc}$,
define
$g(x,y) = h(c(\pi_1(x,y)), (x,y))$ as follows.
Define $c(x) = \{ (i,f(x,\un(i))) ~|~ i = 0, \ldots, p(|x|) \}$;
define $h$ to be a function computed by a polynomial-time algorithm
that, on an input $(d,(x,y))$
where $d$ has the form 
$\{ (i,a_i) ~|~ i=0,\ldots,p(|x|) \}$ and $|(x,y)| \leq p(|x|)$,
outputs $(a_{|y|}, y)$; 
and otherwise outputs a fixed string $z_N \notin B'$.
We claim that $(B, \pi_1)$ is in $\chopped{\fancyc}$
via $g$.

Let $(x,y) \in \Sigma^* \times \Sigma^*$.
If $|(x,y)| > p(|x|)$, then
$(x,y) \notin B$ and $g(x,y) = x_N \notin B'$.
If $|(x,y)| \leq p(|x|)$, then
$g(x,y) = h(c(x),(x,y))
= (f(x,\un(|y|)),y) \in B' \Leftrightarrow (x,y) \in B$.
Using the fact that $f$ is poly-length, 
it is straightforward to verify that
there exists a polynomial $p'$ such that,
it holds that $|g(x,y)| \leq p'(|\pi_1(x,y)|) = p'(|x|)$.

Now suppose that
$(B, \pi_1)$ is in $\chopped{\fancyc}$.
Then, there exists a function
$g(x,y) = h(c(x),(x,y))$
that is poly-comp with respect to $\pi_1$ 
and a language $D \in \fancyc$
such that 
$(x,y) \in B \Leftrightarrow g(x,y) \in D
\Leftrightarrow h(c(x),(x,y)) \in D$.
Define $D' = \{ ((e,x),y) ~|~ h(e,(x,y)) \in D \}$.
Clearly, $D'$ many-one polynomial-time reduces to $D$
(since $h$ is polynomial-time computable),
so by hypothesis, $D' \in \fancyc$.
Define $c'$ to be a function such that
$c'(x,1^k) = (c(x),x)$.
We have
$(x,y) \in B 
\Leftrightarrow
h(c(x),(x,y) \in D
\Leftrightarrow
((c(x),x),y) = (c'(x,\un(|y|)),y) \in D'$.
Hence, we have $B \in \mixed{\fancyc}$
via $c'$ and $D'$.
\end{proof}

\subsection{A lemma}

We use $\mu: \Sigma^* \times \Sigma^* \to \Sigma^*$
to denote the parameterization
defined by $\mu(x,y) = (x,\un(|y|))$.

\begin{lemma}
\label{lemma:mixed-to-polycomp}
Suppose that $A$ and $B$ are languages of pairs
such that there exists a mixed reduction from $A$ to $B$.
Then, there exists a poly-comp reduction
from $(A, \mu)$ to $(B, \pi_1)$.
\end{lemma}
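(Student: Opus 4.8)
The plan is to take the mixed reduction $(f_1,f_2,g)$ from $A$ to $B$ and reorganize its three components into the pair $(G,s)$ demanded by the definition of a poly-comp reduction from $(A,\mu)$ to $(B,\pi_1)$. The key observation driving everything is that the parameter $\mu(x,y) = (x,\un(|y|))$ is exactly the argument that $f_1$ and $f_2$ receive, so the ``precomputation'' permitted by poly-compilability can absorb both $f_1$ and $f_2$, while the polynomial-time stage handles the single polynomial-time map $g$ (which is allowed to see all of $(x,y)$).

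Concretely, first I would define the compilation function $C \colon \Sigma^* \to \Sigma^*$ on inputs of the form $(x,\un(n))$ by $C(x,\un(n)) = (f_1(x,\un(n)),f_2(x,\un(n)))$, leaving it arbitrary on other inputs; since $f_1$ and $f_2$ are computable and poly-length, $C$ is computable and poly-length. Next I would define $F \in \fptime$ by $F((a_1,a_2),(x,y)) = (a_1,g(a_2,y))$, which is polynomial-time computable because $g$ is. Setting $G(x,y) = F(C(\mu(x,y)),(x,y))$, we obtain $G(x,y) = (f_1(x,\un(|y|)),g(f_2(x,\un(|y|)),y))$, so $G$ is poly-compilable with respect to $\mu$, and the defining equivalence of the mixed reduction is precisely $(x,y) \in A \Leftrightarrow G(x,y) \in B$.

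It then remains to exhibit $s$. Since $\pi_1(G(x,y)) = f_1(x,\un(|y|))$ and $\mu(x,y) = (x,\un(|y|))$, I would simply take the singleton-valued map $s(x,\un(n)) = \{f_1(x,\un(n))\}$ (again arbitrary on malformed inputs), which is computable and poly-length because $f_1$ is; then $\pi_1(G(x,y)) \in s(\mu(x,y))$ holds by construction. Together $G$ and $s$ form the desired poly-comp reduction. I do not expect a genuine obstacle here: the only care needed is in the routine verifications that $C$ depends on its input only through $\mu(x,y)$ (so that it is legitimately a compilation of the parameter rather than of the whole instance) and that the computability and length bounds on $f_1,f_2,g$ transfer to $C$, $F$, and $s$ as the definitions of poly-compilable and poly-length require.
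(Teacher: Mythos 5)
Your proposal is correct and follows essentially the same route as the paper's proof: the compilation function packaging $(f_1,f_2)$, the polynomial-time stage $(a_1,a_2),(x,y)\mapsto(a_1,g(a_2,y))$, and the singleton map $s$ returning $\{f_1(x,\un(|y|))\}$ all match the paper's construction exactly. No gaps.
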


\begin{proof}
Let $(f_1, f_2, g)$ be the mixed reduction
from $(A, \mu)$ to $(B, \pi_1)$.
We define a poly-comp reduction $(h,s)$
with $h(x,y) = f(c(\mu(x,y)), (x,y))$
defined as follows.
Define $c$ to be a computable, poly-length function such that
$c(x,1^k) = (f_1(x,1^k),f_2(x,1^k))$;
define $f$ so that
$f((a_1,a_2),(x,y)) = (a_1,g(a_2,y))$.
Define $s(x,1^k) = \{ f_1(x,1^k) \}$.

Let $(x,y) \in \Sigma^* \times \Sigma^*$.
Observe that 
$h(x,y) 
= f( (f_1(x,\un(|y|)),   f_2(x,\un(|y|))), (x,y))
=  (  f_1(x,\un(|y|)), g(f_2(x,\un(|y|))), y) )$.
We thus have
$(x,y) \in A \Leftrightarrow
h(x,y) \in B$.
Moreover,
we have $\pi_1(h(x,y)) = f_1(x,\un(|y|)) \in s(\mu(x,y))$.
\end{proof}

\subsection{Proof of Theorem~\ref{thm:hardness-from-epsilon}}

\begin{proof}
We give a poly-comp reduction
from $(Q^+,\len)$ to $(A, \mu)$.
By Lemma~\ref{lemma:mixed-to-polycomp}
(and Theorem~\ref{thm:polycomp-reducibility-transitive}),
this implies that there exists a poly-comp reduction
from $(Q^+,\len)$ to $(B, \pi_1)$,
which suffices by Proposition~\ref{prop:complete-for-choppedc}.
Define $g$ by $g(x) = (\epsilon,x)$
and
$s$ by $s(1^m) = \{ \mu(\epsilon,1^m) \}$.
It is straightforward to verify that
$(g,s)$ is a poly-comp reduction from $(Q^+,\len)$
to $(A, \mu)$.
\end{proof}

\subsection{Proof of Theorem~\ref{thm:hardness-from-poly-bd-slices}}

\begin{proof}
We give a poly-comp reduction from $(A, \pi_1)$ to $(A, \mu)$.
This suffices by Lemma~\ref{lemma:mixed-to-polycomp}
(and Theorem~\ref{thm:polycomp-reducibility-transitive}).
Fix $z_N$ to be a string not in $A$.
Let $p$ be a polynomial witnessing that $(A, \pi_1)$
has poly-bounded slices.
Define $g: \Sigma^* \times \Sigma^* \to \Sigma^*$
to be the function where
$g(x,y) = (x,y)$ if $|y| \leq p(|x|)$,
and $g(x,y) = z_N$ if $|y| > p(|x|)$.
Define $s: \Sigma^* \to \Sigma^*$
by $s(x) = \{ (x,\un(0)), (x,\un(1)), \ldots, (x,\un(p(|x|))) \}
\cup \{ \mu(z_N) \}$.
It is straightforward to verify that $(g,s)$
is a poly-comp reduction from $(A, \pi_1)$ to $(A, \mu)$.
\end{proof}

\end{document}